\tikzset{
   n/.style= {circle,fill,inner sep=1.5pt,node distance=2cm}
  ,acc/.style={circle,draw,inner sep=3pt,node distance=2cm}
  ,phantom/.style={circle},
  ,arr/.style={->, >=stealth, semithick, shorten <= 3pt, shorten >= 3pt}
}
\newcommand{\Proc}{\mathbb{P}}
\newcommand{\Alp}{\Sigma}
\newcommand{\diam}{\mathrm{Diam}}
\newcommand{\A}{\mathcal{A}}
\newcommand{\D}{\mathcal{D}}
\newcommand{\Data}{\mathbb{D}}
\newcommand{\CH}{\mathrm{CH}}
\newcommand{\Lang}{\mathcal{L}}
\newcommand{\arch}{\ensuremath{\mathbb{C}}\xspace}
\newcommand{\pedge}{e^\uparrow}
\newcommand{\cedges}{E^\downarrow}
\newcommand{\pcedges}{E^\updownarrow}
\newcommand{\pedgenew}{e'^\uparrow}
\newcommand{\cedgesnew}{E'^\downarrow}
\newcommand{\pcedgesnew}{E'^\updownarrow}
\newcommand{\swap}{\textsc{swap}}
\newcommand{\move}{\textsc{move}}
\newcommand{\connect}{\textsc{conn}}
\newcommand{\disc}{\textsc{disc}}
\newcommand{\op}{\mathrm{op}}
\newcommand{\Op}{\mathit{Op}}
\newcommand{\nop}{\textsc{nop}}
\newcommand{\swapop}{\swap}
\newcommand{\moveop}{\move}
\newcommand{\connop}{\connect}
\newcommand{\discop}{\disc}
\newcommand{\CA}{\mathbb{C}}
\newcommand{\Arch}{\mathfrak{A}}
\newcommand{\Acc}{\mathit{Acc}}
\newcommand{\tree}{T}
\newcommand{\edgelab}{\ell}
\newcommand{\nset}{[n]}
\newcommand{\nsetp}{\nset^+}
\newcommand{\procfromlab}{\pi}
\newcommand{\neigh}{N}
\newcommand{\TCAset}[2]{\mathcal{T}_{#1}^{#2}}
\newcommand{\TCAs}{\TCAset{\CH}{\Proc}}
\newcommand{\diamtree}{\mathrm{TDiam}}
\newcommand{\traceview}[2]{\mathrm{view}_{#1}(#2)}
\newcommand{\traceparentview}[2]{\mathrm{view}_{#1}^{\uparrow}(#2)}
\newcommand{\maketree}[1]{\mathbb{T}(#1)}
\newcommand{\cc}{\mathit{cc}}
\newcommand{\dc}{\mathit{dc}}
\newcommand{\SyncData}{\mathit{SyncData}}
\newcommand{\sync}{\mathrm{sync}}
\newcommand{\treesync}{T_\sync}
\newcommand{\statefromsync}{\sigma}
\newcommand{\idifroot}[3]{\textsc{id-if-root}_{#1}^{#2}(#3)}
\newcommand{\invdef}[1]{(\texttt{I}_{\textsc{def}}[#1])}
\newcommand{\invstate}[1]{(\texttt{I}_S[#1])}
\newcommand{\invstateparent}[1]{(\texttt{I}_S^\uparrow[#1])}
\newcommand{\invCA}[1]{(\texttt{I}_{\CA}[#1])}
\newcommand{\invtree}[1]{(\texttt{I}_{\tree}[#1])}
\newcommand{\invcc}[1]{(\texttt{I}_{\cc}[#1])}
\newcommand{\invdc}[1]{(\texttt{I}_{\dc}[#1])}
\begin{document}

\title{Distribution of Reconfiguration Languages maintaining Tree-like
Communication Topology
\thanks{Supported by the ERC Consolidator grant D-SynMA (No. 772459).}
}

\author{Daniel Hausmann
	\and
Mathieu Lehaut
\and
Nir Piterman
}

\authorrunning{D. Hausmann, M. Lehaut, N. Piterman}

\institute{University of Gothenburg and Chalmers University of
Technology, Sweden}

\maketitle

\begin{abstract}
We study how to distribute trace languages in a setting where processes
communicate via reconfigurable communication channels.
That is, the different processes can connect and disconnect from
channels at run time.
We restrict attention to communication via tree-like communication
architectures.
These allow channels to connect more than two processes in a way that
maintains an underlying spanning tree and keeps communication
continuous on the tree.
We make the reconfiguration explicit in the language allowing both a
centralized automaton as well as the distributed processes to share
relevant information about the current communication configuration.
We show that Zielonka's seminal result regarding distribution of
regular languages for asynchronous automata can be  generalized
in this setting, incorporating both reconfiguration and more than
binary tree architectures.
\end{abstract}

\section{Introduction}
Over the past years, computing devices have proliferated to the
point where they are ubiquitous.
These devices are not only lightweight and affordable but also highly
prevalent and mobile.
They are used, for example, in sensor networks or multi-agent or
multi-robot systems.
In such applications, communication requires mobility and ad-hoc
connectivity depending on need and availability.
Indeed, participants in communications frequently leave or join specific
tasks or groups and change who they communicate with based on, e.g.,
requirements or location.
This prompts us to explore how
communication dynamics evolve when dealing with numerous mobile
participants that have to collaborate effectively.

We consider an extension of Zielonka's asynchronous automata~--~a
canonical formalism for distributed systems with a fixed communication
architecture~\cite{Zielonka87}.
A system consists of multiple finite-state automata called processes.
Processes communicate by synchronizing on channels.
While in asynchronous automata processes share their full state
information in every communication, here, we replace this with
information sharing through data exchange.
The data exchange allows to define how
processes update their local states without knowing the number of
participants in a communication or their identity.
A communication happens only if all involved participants are ready for
it.
Thus, a single process can block communications they are not ready to
accept.
The considered extended asynchronous automata allow processes to decide in every
state to which channels they wish to be connected to.
Thus, their connectivity changes from transition to transition at run
time.
They can easily establish groups and change the set of processes they
communicate with.
This extension is called \emph{reconfigurable asynchronous automata}
(RAA), and was first defined in \cite{LehautP24}.
RAA are inspired by the attribute-based communication calculus \cite{alrahman2015calculus,abd2019calculus}
and by channeled transition systems \cite{AlrahmanP21}, but are simplified to
include only one form of communication (CTS include both broadcast and
multicast channels, the latter corresponding to the type of
communication
allowed here).

Asynchronous automata are notable due to Zielonka's seminal result on
\emph{distributivity} of regular languages \cite{Zielonka87}.
Namely, for every regular language that respects the independence of
communications in a given fixed communication architecture, it is
possible to define a team of processes who distribute the original
language.
That is, their distributed interaction reproduces the regular language.
In other words, one can build a \emph{distributed} implementation from a \emph{centralized} specification.
Zielonka's result is quite involved and is source for much ongoing
research work on establishing a better understanding of the result \cite{MukundS97,genest2006constructing,GenestGMW10,gimbert2022distributed,adsul2024expressively}.
Here we concentrate on the restriction of this result to the case where
processes communicate via an acyclic binary communication architecture~\cite{KrishnaM13}.
That is, every channel is binary (between two processes) and the graph
of channels forms a tree.
Distribution in this restricted setting is much better understood:
the tree enables a distribution
construction where the state space of every process is only quadratic
in the state space of the original deterministic automaton that
(centrally) recognizes the language.
In comparison, in the general case,
the distribution construction
incurs exponential
blow-up (with a non-linear exponent).

As mentioned, in asynchronous automata every channel has a fixed set of
processes connected to it, while our framework allows processes to
connect and
disconnect to/from channels.
We call this process \emph{reconfiguration}, as processes reconfigure
their communication interfaces, and we call the way they are connected
at a certain point in time the \emph{communication architecture}.
In order to extend Zielonka's distribution
results to our automata we make reconfiguration explicit in the
communication.
This allows both the centralized and the distributed systems to follow
how the communication architecture changes.
We do that by including, with every communication, instructions
regarding how the communication architecture is changed and do not allow
other (invisible) changes to the architecture.
Based on this information, the centralized automaton can keep track of
the communication architecture.
We make this notion explicit by considering deterministic automata that
keep track of both the current run of the automaton,
as well as the changing communication architecture.
The instructions that we include allow changes in the structure of a
tree by reversing the order between parent and child vertices and by
transferring a child vertex to its parent or to another child.
In addition, we allow joining or leaving channels in restricted ways
that preserve the connectivity as \emph{tree-like}.
A communication architecture is tree-like if it has an underlying spanning
tree and allows communication that is local with respect to this tree.
That is, all communication is continuous on the tree.
We show that the allowed reconfigurations enable transfer
between any two tree-like communication architectures (in multiple
steps).

Based on these definitions of reconfigurations and reconfiguration
languages, we show that regular languages of reconfigurations can be distributed
if they satisfy a reasonable structure constraint corresponding to the independence of
communications.
That is, they can be recognized by reconfigurable asynchronous automata
that connect to channels according to the instructions given to them
and are only aware of communications they participate in.
To this end, we generalize the construction of Krishna and Muscholl
\cite{KrishnaM13} by adding to it
a part that follows distributively the communication architecture.
We show that these additional parts allow to keep track of how to
update the local view of the global state.
They allow to do so also in the context of communication that is not
necessarily binary (as mentioned, we support tree-like communication
architectures in which channels may subsume more than two participants
provided they are continuous on the spanning
tree).
This, as far as we know, has not been recognized even for fixed
communication structures.
The distribution constructs processes whose number of states is
polynomial in the number of states of a centralized automaton but also keeps track of potentially exponentially many options for the way the
communication architecture is arranged around a process.

The paper is organized as follows.
In Section~\ref{sec:automata} we define automata, asynchronous
automata, and reconfigurable asynchronous automata.
In Section~\ref{sec:tree-like} we define tree-like
communication architectures and how they change.
Section~\ref{sec:reconfig langs} shows how
reconfiguration is exposed through the communication itself, and defines
requirements on languages that enable distribution.
We show in Section~\ref{sec:diamond closed} the distribution
construction and its correctness and conclude in Section~\ref{sec:conc}.

\section{Automata and Reconfigurable Asynchronous Automata}\label{sec:automata}


\paragraph{Deterministic Finite Automata.}

Given a set of letters $\Sigma$, a deterministic finite automaton
is $\D = (S, s^0, \Delta, F)$, where
$S$ is a finite set of states, $s^0\in S$ is an initial state,
$\Delta:S\times \Sigma \to S$ is a partial transition
function, and $F\subseteq S$ is a set of accepting states.
Given a word $w=a_0\cdots a_{n-1}$, a run of $A$ on $w$ is
$r=s_0,\ldots, s_n$ such that $s_0=s^0$
and for every $0\leq i < n $ we have $s_{i+1} = \Delta(s_i,a_i)$.
A run is accepting if $s_n \in F$ and then $w$ is accepted by
$\D$.
The language of $\D$, denoted by $\Lang(\D)$, is the set of
words accepted by $\D$.

An \emph{independence relation} is a symmetric relation $I \subseteq
\Sigma\times \Sigma$.
Two words $u = u_1 \cdots u_n$ and $v = v_1 \cdots v_n$ are said
to be $I$-indistinguishable, denoted by $u \sim_I v$, if one can start
from $u$, repeatedly switch two consecutive independent letters, and
end up with $v$.
That is, for all $(a,b)\in I$ and all $u,v\in \Sigma^*$, we
have $uabv\sim_I ubav$.
We denote by $[u]_I$ the $I$-equivalence class of a word $u\in\Sigma^*$.
Let $\A = (S,s_0,\Delta,F)$ be a deterministic automaton over
$\Sigma$.
We say that $\A$ is $I$-diamond if for all pairs of independent
letters $(a,b) \in I$ and all states $s \in S$, we have $\Delta(s,ab)
= \Delta(s,ba)$.
If $\A$ has this property, then a word $u$ is accepted by $\A$ if and
only if all words in $[u]$ are accepted.

\paragraph{Asynchronous Automata.}
Let $\Proc$ be a finite set of \emph{processes}.
A \emph{communication architecture} is
$\arch: \Sigma \to 2^\Proc$, associating with each letter the
subset of processes reading it.
We put $\arch^{-1}(p) = \{a \in \Sigma \mid p \in \arch(a)\}$.
A distributed alphabet is $(\Sigma,\arch)$, where $\arch$ is a
communication architecture.
It induces an independence relation
$I(\arch) \subseteq \Sigma\times\Sigma$ by $(a,b)\in I(\arch)$ iff
$\arch(a)\cap
\arch(b)=\emptyset$.
We say that $\arch$ is binary if for each $a\in \Sigma$ we have
$|\arch(a)|=2$.
A binary $\arch$ induces a tree if the graph $(\Proc,E)$, where
$(p,q)\in E$ iff $\arch(a)=\{p,q\}$ for some $a\in \Sigma$, is a tree.

An \emph{asynchronous automaton} (in short: AA) \cite{Zielonka87} over
distributed alphabet $(\Sigma,\arch)$ and processes $\Proc$ is
$\mathcal{B} = ((S_p)_{p \in \Proc}, (s^0_p)_{p \in \Proc}$,
$(\delta_a)_{a \in
	\Alp}, \Acc)$
such that:
\begin{compactitem}
	\item $S_p$ is the finite set of states for process $p$, and $s^0_p
	\in S_p$ is its initial state,
	\item $\delta_a: \prod_{p \in \arch(a)} S_p \to \prod_{p \in
	\arch(a)}
	S_p$ is a partial transition function for letter $a$ that
	only depends on the states of processes in $\arch(a)$ and changes
	them,
	\item $\Acc \subseteq \prod_{p \in \Proc} S_p$ is a set of accepting
	states.
\end{compactitem}
A global state of $\mathcal{B}$ is  $\textbf{s} = (s_p)_{p
\in \Proc}$, giving the state of each process.
For global state $\textbf{s}$ and subset $P \subseteq \Proc$, we
denote by $\textbf{s}{\downarrow}_P = (s_p)_{p \in P}$ the subset of
$\textbf{s}$ of states from processes in $P$.
A run of $\mathcal{B}$ on a word $a_1 a_2\ldots a_n$ is a sequence $\textbf{s}_0 a_1
\textbf{s}_1 a_2
\dots \textbf{s}_n$ such that for all $0 < i \leq n$, $\textbf{s}_i \in
\prod_{p \in \Proc} S_p$, $a_i \in \Alp$, satisfying $\textbf{s}_0 =
(s^0_p)_{p \in \Proc}$ and the following relation:
$\textbf{s}_{i}{\downarrow}_{\arch(a_i)} =
\delta_{a_i}(\textbf{s}_{i-1}{\downarrow}_{\arch(a_i)})$
 and
$\textbf{s}_{i}{\downarrow}_{\Proc\setminus\arch(a_i)} =
\textbf{s}_{i-1}{\downarrow}_{\Proc\setminus\arch(a_i)}
$.
A run is accepting if $\textbf{s}_n$ belongs to $\Acc$.
The word $a_1 a_2 \dots$ is accepted by $\mathcal{B}$ if such an
accepting run
exists (note that automata are deterministic but runs on certain words
may not exist).
The language of $\mathcal{B}$, denoted by $\Lang(\mathcal{B})$, is the
set of words
accepted by $\mathcal{B}$.

We say that a language $\Lang$ over
$(\Sigma,\arch)$ is \emph{distributively recognized} if
there exists an asynchronous automaton $\mathcal{B}$
such that $\Lang(\mathcal{B})=\Lang$.

\begin{theorem}[\cite{Zielonka87,KrishnaM13}]
	Given an $I(\arch)$-diamond deterministic automaton $\D$, there
	exists an AA
	that distributively recognizes this language.
	In general if $\D$ has $n$ states then every process in the AA has
	$2^{O(n^2)}$ states.
	If $\arch$ induces a tree, then every process in the AA has $O(n^2)$
	states.
\end{theorem}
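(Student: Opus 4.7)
The plan is to construct the distributed automaton explicitly; the tree-case bound is a refinement of the general Zielonka construction. The diamond property guarantees that $\Lang(\D)$ is closed under $I(\arch)$-equivalence, so it suffices to build an AA in which each process maintains enough information to agree on the $\D$-state modulo reorderings of independent letters.

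For the general bound $2^{O(n^2)}$, I would follow Zielonka's gossip construction. Each process $p$ maintains a \emph{view} recording, for every other process $q$, the most recent communication involving $q$ that $p$ knows about, together with the state of $\D$ reached at that point. When a letter $a$ fires, the participants in $\arch(a)$ exchange views and merge them entry-wise by keeping the most recent record per process; they then advance $\D$ by $a$ via $\Delta$. A view is essentially a function from $\Proc$ into a bounded history length times $S$, yielding $2^{O(n^2)}$ states per process.

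For the tree case I would follow the refinement of Krishna and Muscholl. When $\arch$ induces a binary tree, every letter $a$ corresponds to a tree edge $(p,q)$ whose removal splits $\Proc$ into two subtrees $P_p \ni p$ and $P_q \ni q$. By independence, any other letter $b$ has $\arch(b)$ contained in exactly one of these subtrees, so between two consecutive firings of $a$ the effect on $\D$ factors cleanly through $p$'s side or $q$'s side of the edge. Consequently, each process only needs to store a small constant number of $\D$-states, roughly the joint state believed after the last communication on the relevant edge together with the current local update. When $a$ fires, $p$ and $q$ exchange these summaries and apply $\Delta$ to determine the new joint state, and the bookkeeping can be arranged to use $O(n^2)$ states overall.

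The main obstacle will be the correctness argument in the tree case: showing by induction on run length that whenever two processes synchronise, their combined view equals $\Delta(s^0, w')$ for some $w' \in [w]_{I(\arch)}$, where $w$ is the global word executed so far. The diamond property is used repeatedly to reorder independent letters into an ordering consistent with each local view, and the tree partition is used to argue that no information is lost across a synchronisation. Acceptance is then read off once all required synchronisations have taken place, propagating through the spanning tree.
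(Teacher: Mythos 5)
This theorem is cited background (Zielonka; Krishna--Muscholl); the paper gives no proof of it, but it does recount and then generalize the tree-case construction, so that is the right benchmark. Your tree-case sketch is in the same spirit: each process keeps the $\D$-state at its last synchronisation on its parent edge together with its current best estimate, giving the $O(n^2)$ bound. However, there is a genuine gap at the merge step. You write that on a communication the two processes ``exchange these summaries and apply $\Delta$'', but $\Delta$ cannot be applied to summaries: if $s$ is the state at the last joint communication and $s_1=\Delta(s,w_1)$, $s_2=\Delta(s,w_2)$ are the two sides' current estimates for \emph{unknown} independent words $w_1,w_2$, the processes must recover $\Delta(s,w_1w_2)$ without access to $w_1$ or $w_2$. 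That this is possible from $s,s_1,s_2$ together with the \emph{set of letters} occurring on the child's side is exactly the content of the $\diam$-function lemma (attributed in the paper to Krishna--Muscholl and Cori et al., proved by guessing witness words over separated channel sets and invoking the diamond property). Your construction neither states this lemma nor stores the letter-set component that it needs, so the induction you propose cannot be closed as written. Once $\diam$ is available, your invariant (the combined view equals $\Delta(s^0,w')$ for some $w'\sim_{I(\arch)}w$) is the right one and matches the paper's $\traceview{X}{w}$ machinery.

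A secondary problem is the general-case complexity accounting. In the theorem, $n$ is the number of states of $\D$, not the number of processes. A per-process table mapping each other process to a latest record and a state of $\D$ has size roughly $(|\Sigma|\cdot n)^{|\Proc|}$, which for fixed $\Proc$ is polynomial in $n$ and in any case is not $2^{O(n^2)}$. The $2^{O(n^2)}$ bound comes from each process maintaining objects that are subsets of $S\times S$ (relations recording the possible effect of the unseen part of the word), on top of the gossip/time-stamping layer; your sketch does not produce state spaces of that shape. Since the theorem is cited rather than proved here this is less critical, but as a proof attempt the bound does not follow from the construction you describe.
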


\paragraph{Reconfigurable Asynchronous Automata.}

We present Asynchronous Automata with
reconfiguration.
Here the communication architecture is modified dynamically during a
run.
We fix a finite set $\Proc$ of processes 
and a finite set $\CH$ of channels (intuitively corresponding to the
alphabet).

Let $\Data$ be a finite set of synchronization data contents.
A \emph{reconfigurable asynchronous automaton} (RAA) over
alphabet $\CH$ and processes $\Proc$ is
$\mathcal{A} = ((S_p)_{p \in \Proc}, (s^0_p)_{p \in \Proc}$,
$(\delta_p)_{p \in 	\Proc}, (L_p)_{p\in \Proc},(F_p)_{p\in \Proc})$,
where
\begin{compactitem}
	\item $S_p$ and $s^0_p$ are as before,
	\item $\delta_p: S_p \times (\Data \times \CH) \rightharpoonup S_p$ is the
	partial
	transition function, where $\delta_p(s,(d,c)) = s'$ means going from
	state $s$ to $s'$, reading a data on channel $c$ with content
	$d$.
	We write $(s,(d,c),s')\in \delta$ for $\delta(s,(d,c))=s'$,
	\item $L_p: S_p \to 2^\CH$ is a listening function such that $c \in
	L_p(s)$
	if
	there is a transition of the form $(s,(d,c),s') \in \delta_p$, i.e.
	state
	$s$ must be listening to channel $c$ if there is some transition from
	$s$ involving a data on $c$.
	\item $F_p:S_p \times \Data \rightarrow \{\bot,\top\}$ is an acceptance
	function.
\end{compactitem}
Global states of $\A$ are as before.
A run of $\A$ is $\textbf{s}_0 m_1
\textbf{s}_1 m_2
\dots \textbf{s}_n$, where for all $0 < i \leq n$,
$\textbf{s}_i=(s^i_p)_{p\in \Proc}$ is a global state, $m_i=(d_i,c_i)
\in \Data \times
\CH$, satisfying
$\textbf{s}_0 =
(s^0_p)_{p \in \Proc}$ and all the following:
	\begin{compactenum}
	\item $\exists p$ s.t. $c_i \in L_p(s_p^{i-1})$,
	\item $\forall p$ s.t. $c_i \in L_p(s_p^{i-1}), (s^{i-1}_p, (d_i,c_i), s^i_p) \in
	\delta_p$, and
	\item $\forall p$ s.t. $c_i \notin L_p(s_p^{i-1}), s^i_p = s^{i-1}_p$.
\end{compactenum}
In plain words, there is a transition upon reading channel name $c$
if all processes listening to $c$ have an according transition,
with at least one process listening to $c$, whereas those processes
that do not listen to $c$ are left unchanged.
Note that if some process listens to $c$ but does not implement the
transition, then that transition is blocked.
A run is accepting if for some $d \in \Data$ and for all $p\in \Proc$ we
have $F_p(s^p_n,d)=\top$.
That is, acceptance happens only if all processes can agree on a data value $d$
that is accepted by all.
The language of $\A$, denoted by $\Lang(\A)$, is the set of
words over $\CH$ of the form $c_0 c_1 \dots$ such that
there exists an accepting run of the form $\textbf{s}_0 (d_0, c_0) \textbf{s}_1 (d_1, c_1) \dots$, i.e. we focus only on the
sequence of channels where data is sent, and drop the states and data
contents.


\begin{example}
	Figure~\ref{example:CTS} shows an example RAA over channels
	$\CH = \{a,b,c\}$ and three processes $\Proc =
	\{p,q,r\}$, with the listening function given by the labels to the right of the individual states.
	Here we take a singleton set $\Data = \{d\}$ as the set of data contents, so for
	readability purposes it is omitted from the
	transitions.
	Note that when process $p$ is in state $s_2$, it is listening to channel $c$ but no $c$-transition is
	implemented, therefore a communication on $c$ is impossible (similarly for $q$ and $t_2$).
	So the only way a communication can happen on $c$ is when $p$ and $q$ are in $s_1$ and $t_1$ respectively, which
	means only process $r$ listens to $c$. 
	We then set $F_p(x,d) = \top$ for all $p \in \Proc$ and $x \in S_p$.
	The language accepted by this RAA is the set of all sequences such that in any prefix ending with a communication on $c$, we have an even number of $a$ and $b$:
$$\Lang(\A) = \left \{ v_0\ldots v_n \in \{a,b,c\}^*  \left |~
\begin{array}{r}
\forall i~.~ v_i=c \mbox{ implies } a_\sharp(v_0\ldots v_i) =_{mod 2} 0
\\
\mbox{ and }
b_\sharp(v_0\ldots v_i) =_{mod 2}0
\end{array}
\right .
\right \},$$
\noindent
where $\sigma_\sharp(w)$ is the number of occurrences of letter
$\sigma$ in word $w$.
\end{example}

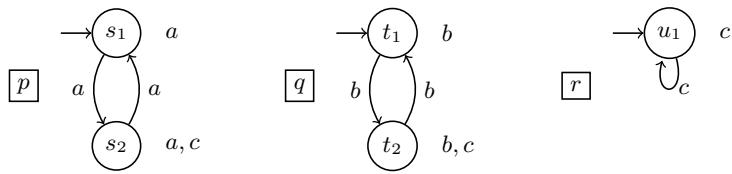
\begin{figure}[bt]
	\tikzset{every state/.style={minimum size=15pt}}
	\begin{center}
		\begin{tikzpicture}[
			auto,
			node distance=1.5cm,
			semithick
			]
			\node[state,initial left,initial text=] (a1) {$s_1$};
			\node[] (La1) [right = 0.2cm of a1] {$a$};
			\node[draw,rectangle] (p) [left = 0.7cm of a1,yshift=-7mm] {$p$};
			\node[state] (a2) [below of=a1] {$s_2$};
			\node[] (La2) [right = 0.2cm of a2] {$a,c$};
			\node[state,initial left,initial text=] (b1) [right = 3cm of a1]
			{$t_1$};
			\node[] (Lb1) [right = 0.2cm of b1] {$b$};
			\node[draw,rectangle] (q) [left = 0.7cm of b1,yshift=-7mm] {$q$};
			\node[state] (b2) [below of=b1] {$t_2$};
			\node[] (Lb2) [right = 0.2cm of b2] {$b,c$};
			\node[state,initial left,initial text=] (c1) [right = 3cm of b1]
			{$u_1$};
			\node[] (Lc1) [right = 0.2cm of c1] {$c$};
			\node[draw,rectangle] (r) [left = 0.7cm of c1,yshift=-7mm] {$r$};

			\path[->] (a1) edge [bend right] node [left] {$a$} (a2);
			\path[->] (a2) edge [bend right] node [right] {$a$} (a1);
			\path[->] (b1) edge [bend right] node [left] {$b$} (b2);
			\path[->] (b2) edge [bend right] node [right] {$b$} (b1);
			\path[->] (c1) edge [loop below] node [right] {$c$} (c1);
		\end{tikzpicture}
		\caption{An RAA $\A$ over three processes.}\label{example:CTS}
	\end{center}
\vspace*{-7mm}
\end{figure}

We say that an RAA over $\CH$ and $\Proc$ has a fixed communication
architecture if for all $p\in \Proc$ and
all $s,s'\in S_p$ we have $L_p(s)=L_p(s')$.
That is, the listening function of each process never changes.
In that case, there is an obvious definition of a communication
architecture $\arch$.


Lehaut and Piterman showed that RAA with fixed architecture and
asynchronous automata are different notations for the same object.

\begin{lemma}[\cite{LehautP24}]
There is an isomorphism between RAA with fixed communication
architecture and asynchronous automata.
	\label{lemma:AA to CTS}
\end{lemma}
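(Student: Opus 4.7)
The plan is to prove both directions of the isomorphism by exhibiting mutually inverse translations between fixed-architecture RAA and asynchronous automata, preserving the states, transition structure, and the accepted language.

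For the RAA-to-AA direction, let $\mathcal{A}$ be a fixed-architecture RAA. Each listening function $L_p$ is constant, so I would recover the communication architecture by $\arch(c) = \{p \in \Proc \mid c \in L_p\}$ and set $\Sigma = \CH$. The AA inherits the per-process state spaces and initial states. For each channel $c$, I would assemble the joint transition $\delta_c$ from the per-process transitions by putting $\delta_c((s_p)_{p\in \arch(c)}) = (s'_p)_{p\in \arch(c)}$ whenever some $d \in \Data$ satisfies $(s_p,(d,c),s'_p) \in \delta_p$ for every $p \in \arch(c)$; the per-process determinism of each $\delta_p$ ensures $\delta_c$ is indeed a function. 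Acceptance is lifted by $\Acc = \{\mathbf{s} \in \prod_{p\in\Proc} S_p \mid \exists d \in \Data, \forall p\in \Proc, F_p(s_p,d) = \top\}$.

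For the AA-to-RAA direction, I would keep the processes and per-process state spaces, set $\CH = \Sigma$, and make the listening function constant by $L_p = \arch^{-1}(p)$. Since AA joint transitions can entangle the participants' updates while RAA local transitions cannot, the data channel must carry the missing shared information; I would take $\Data = \prod_{p\in\Proc} S_p$ and declare $(s,(d,c),s') \in \delta_p$ iff $d_p = s$, $\delta_c(d{\downarrow}_{\arch(c)})$ is defined, and its $p$-component equals $s'$. For acceptance I set $F_p(s,d) = \top$ iff $d \in \Acc$ and $d_p = s$, so that the processes can agree on a witness $d$ precisely when the current global state is in $\Acc$. A straightforward induction on run length then verifies that accepting runs correspond step by step, giving language preservation, and that applying the two translations in sequence returns the original object up to the natural identification.

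The main obstacle is reconciling the locality of RAA transitions with the joint, global-state-dependent nature of AA transitions: a single $\delta_c$ in an AA can make each participant's next state depend on all participants' current states, while each $\delta_p$ in an RAA only reads $p$'s own state together with the shared data. The AA-to-RAA direction therefore crucially uses $\Data$ to transmit the joint pre-transition state, and the RAA-to-AA direction existentially eliminates $d$ while preserving determinism. Once this mediation through the data component has been properly set up, the remaining bookkeeping of runs and acceptance is routine.
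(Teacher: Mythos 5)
The paper does not actually prove this lemma; it is imported from \cite{LehautP24}, so there is no in-paper argument to compare against. Judged on its own merits, your AA-to-RAA direction is sound and uses the right device (broadcasting the joint pre-transition state through the data component, then checking $d_p=s$ locally), but your RAA-to-AA direction contains a genuine gap. You claim that per-process determinism of each $\delta_p$ makes the assembled $\delta_c$ a function. It does not: $\delta_p$ is deterministic only for a \emph{fixed} data value, and nothing prevents two distinct $d_1,d_2\in\Data$ from both being enabled at every participant of $\arch(c)$ in the same global state while producing different successor tuples. Already a single-process RAA with $\delta_p(s_0,(d_1,c))=s_1$ and $\delta_p(s_0,(d_2,c))=s_2$, $s_1\neq s_2$, and constant $L_p=\{c\}$ is a legitimate fixed-architecture RAA whose induced $\delta_c$ is a relation rather than a partial function; since the asynchronous automata of this paper are deterministic, existentially eliminating $d$ destroys well-definedness. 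The natural repair is to keep the data in the alphabet, i.e., take $\Sigma=\Data\times\CH$ with $\arch((d,c))=\{p\mid c\in L_p\}$, so that each letter $(d,c)$ induces the product of the local functions $\delta_p(\cdot,(d,c))$, which \emph{is} a partial function on $\prod_{p\in\arch(c)}S_p$.

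A secondary weakness is the round-trip claim. Composing your two translations starting from an RAA replaces its data set by $\prod_{p\in\Proc}S_p$, so the composite is not the identity ``up to natural identification'' in any precise sense; and starting from an AA, a general joint transition $\delta_a$ need not decompose into a product of local updates, so no structural bijection exists without the encoding through $\Data$. What your construction actually establishes (once the determinism issue is fixed) is that the two classes recognize the same languages under suitable alphabet correspondences, not an isomorphism of objects. You should either weaken the statement to language equivalence or make the treatment of the data alphabet explicit on both sides of the correspondence.
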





\section{Tree-like communication architecture with reconfiguration}
\label{sec:tree-like}

We have defined what it means for a communication architecture to
be a tree.
Now we extend this in two orthogonal directions.
First, we are interested in a more general communication architecture
that is \emph{tree-like} rather than a tree.
Second, we consider the fact that for RAA (unlike AA) the communication
architecture can change over time.
This is made formal below.


\subsection{Tree-like communication architecture}

As before, let $\Proc$ be a set of processes and $\CH$ a set of
channels, both finite.
A communication architecture is tree-like if there is a tree that spans
channels that connect more than two processes.
Formally, we have the following.

Let $|\Proc| = n$ and let us denote by $\nset$ the set
$\{0,...,n-1\}$,
and put $\nsetp = \nset \setminus \{0\}$.
A \emph{tree} over $\Proc$ is a tree $\tree =
(\Proc,r,E,\edgelab)$ where $\Proc$ is the set of nodes, of which $r
\in \Proc$ is the \emph{root}, $E \subseteq \Proc^2$ is the set of
\emph{edges} (so that $|E|=n-1$) that is cycle free, and $\edgelab: E
\to
\nsetp$ is a \emph{labeling bijection} that maps each edge to a
strictly positive integer.

%

\begin{definition}\label{def:TCA}
We say that $\Arch = (\CA,\tree)$, where $\CA$ is a communication
architecture and $\tree$ is a tree, is a \emph{tree-like communication
architecture} (short: TCA) if:
\begin{compactenum}
\item\label{item:morethantwo} $|\arch(c)| \geq 2$ for all channels $c
\in \CH$,
\item\label{item:channelconnectivity} for all channels $c \in \CH$, the
set $\arch(c)$ is connected in $T$, i.e. if $p,q \in \arch(c)$, then
all processes along the path from $p$ to $q$ in $T$ are also in
$\arch(c)$,
and
\item\label{item:edgechannel} if $(p,q) \in E$, then there is a channel
$c\in \CH$ such that $p,q \in \arch(c)$.
\end{compactenum}
\end{definition}
Condition~\ref{item:morethantwo} prevents a channel from being trivial.%
\footnote{We could allow channels with only $1$ participant with few modifications.
A channel with $0$ participants will stay at $0$ forever, thus being useless to the computation.}
Condition~\ref{item:channelconnectivity} ensures that a communication on a given channel is always ``local'' in the tree.
Finally, condition~\ref{item:edgechannel} ensures that the tree does not consist of disconnected parts.
Let $\TCAset{\CH}{\Proc}$ be the set of all possible TCA over the
sets $\CH$ of channels and $\Proc$ of processes.

\begin{example}
Fix sets $\Proc = \{p_1,\dots,p_5\}$ of processes and $\CH = \{c_1, c_2, c_3\}$ of communication channels.
Then $\Arch = ((\arch(c_1), \arch(c_2), \arch(c_3)), \tree)$ as given
in Figure~\ref{fig:archi} is a tree-like communication architecture
rooted in $p_1$:
\begin{compactenum}
\item All channels have at least 2 members,
\item Every channel is connected in $\tree$,
\item All edges in $\tree$ are covered by at least one channel.
\end{compactenum}

\end{example}

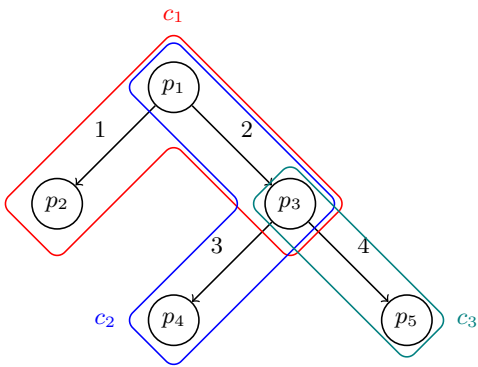
\begin{figure}[bt]
\tikzset{every state/.style={minimum size=15pt}}
\begin{center}
  \begin{tikzpicture}[
		auto,
    node distance=1.5cm,
    semithick
    ]
    \node[state] (1) {$p_1$};
    \node[state,below left = of 1] (2) {$p_2$};
    \node[state,below right = of 1] (3) {$p_3$};
    \node[state,below left = of 3] (4) {$p_4$};
    \node[state,below right = of 3] (5) {$p_5$};


    \path[->] (1) edge node[above left] {$1$} (2);
    \path[->] (1) edge node[above right] {$2$} (3);
    \path[->] (3) edge node[above left] {$3$} (4);
    \path[->] (3) edge node[above right] {$4$} (5);

    \draw[rounded corners,color=red] ($(1.north)+(0,0.4)$) -- ($(2.west)+(-0.4,0)$) -- ($(2.south)+(0,-0.4)$) -- ($(1.south)+(0,-0.4)$) -- ($(3.south)+(0,-0.4)$) -- ($(3.east)+(0.4,0)$) -- cycle node[above,color=red] {$c_1$};
    \draw[rounded corners,color=blue] ($(1.west)+(-0.3,0)$) -- ($(3.west)+(-0.3,0)$) -- ($(4.west)+(-0.3,0)$) node[left,color=blue] {$c_2$} -- ($(4.south)+(0,-0.3)$) -- ($(3.east)+(0.3,0)$) -- ($(1.north)+(0,0.3)$) -- cycle;
    \draw[rounded corners,color=teal] ($(3.north)+(0,0.2)$) -- ($(3.west)+(-0.2,0)$) -- ($(5.south)+(0,-0.2)$) -- ($(5.east)+(0.2,0)$) node[right,color=teal] {$c_3$} -- cycle;
  \end{tikzpicture}
\caption{A tree-like communication architecture: the spanning tree is
given by the black edges, while the communication architecture is drawn
with one color group for each channel. Edges labels are drawn next to
each edge with a matching color.}\label{fig:archi}
\end{center}
\vspace*{-1cm}
\end{figure}

We fix notation regarding the spanning tree $\tree = (\Proc,r,E,\edgelab)$ of a TCA that will be useful later on.
For $p \neq r$, we let $\pedge(p) \in E$ denote the parent edge of $p$; it is the only edge $(p,q)$ (or $(q,p)$) that leads from $p$ toward the root $r$.
We also refer to $q$ as the parent of $p$.
We then let $\cedges(p) \subseteq E$ denote the set of children edges of $p$, which are all edges that $p$ is part of, except for $\pedge(p)$.
Similarly, processes that can be found by following edges from $\cedges(p)$ are the children of $p$.
Finally, put $\pcedges(p) = \{\pedge(p)\} \uplus \cedges(p)$ for any $p \neq r$ and $\pcedges(r) = \cedges(r)$. A neighbor of $p$ is either $p$'s parent or one of its children.
$\tree'$ being a subtree of $\tree$, denoted by $\tree' \leq \tree$, is defined as usual.

A tree $\tree = (\Proc,r,E,\edgelab)$ can be observed locally by
the processes.
Formally, the \emph{neighborhood} of a process $p \in \Proc$ is a tuple $\neigh_p = (\pedge_p, \cedges_p) \in \nset \times 2^{\nsetp}$, with the intuition that when $\tree$ is known then $\pedge_p = \edgelab(\pedge(p))$ and $\cedges_p = \edgelab(\cedges(p))$.
Given a family $\neigh = (\neigh_p)_{p \in \Proc}$ of neighborhoods, we say that the family is consistent if there exists exactly one $r \in \Proc$ such that $\pedge_r = 0$ and $\pedge_r \notin \cedges_q$ for all $q \in \Proc$, and for all $p \neq r$ there exists exactly one $q \in \Proc$ such that $\pedge_p \in \cedges_q$.
In that case, we define $\maketree{\neigh} = (\Proc,r,E,\edgelab)$ where $r$ is the unique process as defined in the previous condition, $E = \{(p,q) \mid \pedge_p \in \cedges_q\}$, and $\edgelab(p,q) = \pedge_p$.
One can easily check that if $\neigh$ is consistent, then $\maketree{\neigh}$ is a well-defined tree over $\Proc$.
Finally, we extend these notions to subsets $P \subseteq \Proc$, where
a consistent family $\neigh = (\neigh_p)_{p \in P}$ (with the $\pedge_r
= 0$ condition dropped) induces a tree $\maketree{\neigh}$ (that is a
subtree of the entire spanning tree).

\subsection{Reconfiguration operations}
We want to support reconfiguration operations to change from one TCA to another.
Fix $\Arch = (\CA,\tree)$ with the same notation as before.
There will be two kinds of operations: those affecting only the spanning tree, and those affecting only the channel memberships.
Those operations are only valid when specific conditions (that depend on which operation is involved) on the TCA are satisfied; if a condition is not satisfied then the result of that operation is undefined.

As the TCA will change due to those operations, it is more convenient to specify edges and processes in terms of integers from $\nset$, with each integer being associated to the process for which the label of the parent edge is that integer, with integer $0$ being associated to the root $r$ of the tree.
Abusing notations, we use $e$ to denote actual edges in a tree as well as integers in $\nset$, with the understanding that one goes from one to the other using the labelling bijection $\edgelab$.
With that idea in mind, we define for any tree $\tree$ the function $\procfromlab: \nset \to \Proc$ where $\procfromlab(0) = r$ and $\procfromlab(e) = p$ such that $p$ is the unique process with $\edgelab(\pedge(p)) = e$ for any $e \neq 0$.

As operations for changing the spanning tree, we define a \emph{swap} operation and a \emph{move} operation.
All operations involve a communication on some channel, say $c\in \CH$.
See Figure~\ref{example:reconfig} for an illustration.

\begin{compactitem}
\item A swap along an edge $e = (p,q)$ between a process $p$ and its parent $q$, both of which must be listening to $c$, switches their positions and makes $p$ the parent of $q$ in the new architecture.
To ensure condition~\ref{item:channelconnectivity} of Definition~\ref{def:TCA} is maintained, this operation can only be done if $p$ is listening to all channels that both $q$ and the parent of $q$ are listening to.\footnote{Alternatively, we could make $p$ connect to those channels as a result of the operation.}

Formally, we want to define the operation $\swap^c(e)(\Arch)$ where $\Arch$ is a TCA and $e \in \nsetp$.
Let $p = \procfromlab(e)$ and $q$ be the parent of $p$ in $\tree$.
For this operation to be defined, we must have $p,q \in \arch(c)$; moreover if $q$ is not the root and $q'$ is the parent of $q$ then we must have $p \in \arch(c')$ for every $c' \in \CH$ such that $q,q' \in \arch(c')$.

If those conditions are satisfied, then we define $\swap^c(e)(\Arch) = (\CA,\tree')$ with $\tree' = (\Proc,r',E',\edgelab')$ as follows.
If $q = r$ then $r' = p$ and both $E'$ and $\edgelab'$ are unchanged.
Otherwise $r' = r$, $E' = E \setminus \{(q,q')\} \uplus \{(p,q')\}$, and $\edgelab'(p,q') = \edgelab(q,q')$, $\edgelab'(e) = \edgelab(e)$ for all other edges $e$.

\item A move of $p$ to $q$ changes the position of a non-root $p$ in the tree so that the new parent of $p$ after the operation is $q$.
For this operation to be legal there are three conditions required. Let $q'$ be the parent of $p$ before the operation.
First, $q$ must be a neighbor of $q'$, i.e. $p$ can only be moved to either the parent of its parent or to one of its siblings.
Second, $q$ and $q'$ must both be part of the communication on $c$ (but
$p$ need not be part of it).
Finally, moving $p$ to $q$ must not disconnect any channel that was shared by $p$ and $q'$.

Formally, the operation $\move^c(e,e')(\Arch)$ with $e \in \nsetp$ and $e' \in \nset$ is defined for $e \neq e'$ when, with $p = \procfromlab(e)$, $q = \procfromlab(e')$, and $q'$ being the parent of $p$ in $\tree$:
1) $q'$ is a neighbor of $q$,
2) $q,q' \in \arch(c)$, and
3) $q \in \arch(c')$ for all $c'$ such that $p,q' \in \arch(c')$.
Then we define $\move^c(e,e')(\Arch) = (\CA,\tree')$ where $\tree' = (\Proc,r,E \setminus \{(p,q')\} \uplus \{(p,q)\},\edgelab')$ such that $\edgelab'(p,q) = \edgelab(p,q')$ and the rest is unchanged.
\end{compactitem}

Consider the tree-like communication architecture in
Figure~\ref{fig:archi}.
Under this architecture,
$\swap^{c_1}(1)$ is possible as both $p_1$ and $p_2$ are connected to
$c_1$ and $p_1$ is the root. The only swap not possible in this
architecture is $\swap^{c_3}(4)$. Indeed, this operation would leave
$c_2$ and $c_1$ disconnected.
The operation $\move^{c_1}(3,2)$ is possible as it leaves $p_4$
connected to $c_2$.
On the other hand, operation $\move^{c_3}(3,4)$ is not possible
since it would result in $c_2$ being disconnected around $p_5$.

\begin{figure}[bt]
	\tikzset{every state/.style={minimum size=15pt}}
	\begin{center}
		\begin{tikzpicture}[
			auto,
			node distance=1.5cm,
			semithick,
			scale=1
			]
			\begin{scope}[shift={(0,0)}]
				\node[state] (p) {$p$};
				\node[state] (q) [above right =  12mm and 4mm of p] {$q$};
				\node[] (p1) [below left =  6mm and 1mm of p] {$\square$};
				\node[] (p2) [below right = 6mm and 1mm of p] {$\triangle$};
				\node[] (q1) [below right =  6mm and 1mm of q] {$\bullet$};

				\path[-,color=red] (p) edge node [above left] {$e$} (q);
				\path[-] (p) edge (p1);
				\path[-] (p) edge (p2);
				\path[-] (q) edge (q1);
				\draw[dotted] (q) -- +(0,0.5);

				\draw[<->] (1.7,0.5) to node [above] {$\swap(e)$} +(1,0);
				\begin{scope}[shift={(3.2,0)}]
					\node[state] (q) {$q$};
					\node[state] (p) [above right =  12mm and 4mm of q] {$p$};
					\node[] (p1) [below = 5mm of p] {$\square$};
					\node[] (p2) [below right =  6mm and 2mm of p] {$\triangle$};
					\node[] (q1) [below = 5mm of q] {$\bullet$};

					\path[-,color=red] (p) edge node [above left] {$e$} (q);
					\path[-] (p) edge (p1);
					\path[-] (p) edge (p2);
					\path[-] (q) edge (q1);
					\draw[dotted] (p) -- +(0,0.5);

					\draw[-] (1.8,-1.5) -- +(0,3.5);
					\begin{scope}[shift={(2.6,0)}]
						\node[state] (p) {$p$};
						\node[state] (q') [above right =  11mm and 3mm of p] {\scalebox{0.8}{$q'$}};
						\node[state] (q) [below right =  11mm and 3mm of q'] {$q$};
						\node[] (p1) [below left =  6mm and 1mm of p] {$\square$};
						\node[] (p2) [below right =  6mm and 1mm of p]
						{$\triangle$};
						\node[] (q1) [below = 0.4cm of q] {$\bullet$};

						\path[-,color=red] (p) edge node [above left] {$e$} (q');
						\path[-] (p) edge (p1);
						\path[-] (p) edge (p2);
						\path[-] (q) edge (q1);
						\path[-] (q) edge node [above right] {$e'$} (q');
						\draw[dotted] (q') -- +(0,0.6);

						\draw[<->] (2.4,0.5) to node [above] {$\move(e,e')$} +(1,0);
						\begin{scope}[shift={(3.3,0)}]
							\node[] (i) {};
							\node[state] (q') [above right =  12mm and 3mm of i]
							{\scalebox{0.8}{$q'$}};
							\node[state] (q) [below right of=q'] {$q$};
							\node[state] (p) [below left of=q] {$p$};
							\node[] (p1) [below left =  6mm and 1mm of p] {$\square$};
							\node[] (p2) [below right =  6mm and 1mm of p]
							{$\triangle$};
							\node[] (q1) [below right =  6mm and 1mm of q]
							{$\bullet$};

							\path[-,color=red] (p) edge node [above left] {$e$} (q);
							\path[-] (p) edge (p1);
							\path[-] (p) edge (p2);
							\path[-] (q) edge (q1);
							\path[-] (q) edge node [above right] {$e'$} (q');
							\draw[dotted] (q') -- +(0,0.6);
			\end{scope}	\end{scope} \end{scope} \end{scope}
		\end{tikzpicture}
	\vspace*{-10mm}
		\caption{Illustration of $\swap$ and $\move$
		operations.}\label{example:reconfig}
	\end{center}
\vspace*{-5mm}
\end{figure}
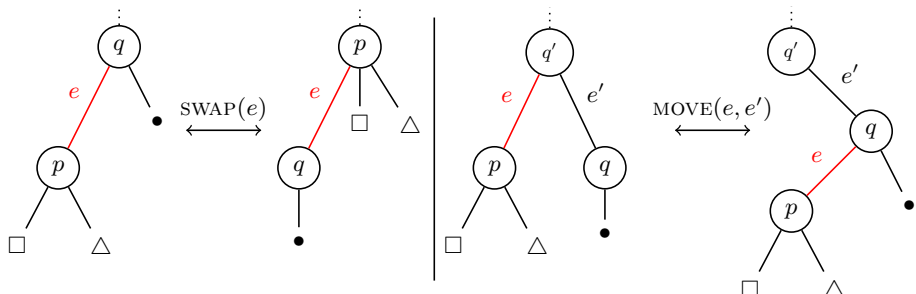

There are also two operations that change channel membership but leave the underlying spanning tree unchanged.
\begin{compactitem}
\item The \emph{connect} operation allows a process $p \in \arch(c)$ to join some specified channel $c'$.
This can happen provided that there is a process adjacent to $p$ that is contained in both $\arch(c)$ and $\arch(c')$.
Intuitively this corresponds to that process ``inviting'' $p$ to join channel $c'$ via a communication on $c$.

Operation $\connect^c(e,c')(\Arch)$ is defined for $e \in \nset$ if, with $p = \procfromlab(e)$, $p \notin \arch(c')$ and there exists $q$ a neighbor of $p$ such that $p,q \in \arch(c)$ and $q \in \arch(c')$.
Then we define $\connect^c(e,c')(\Arch) = (\arch', \tree)$ with $\arch'(c') = \arch(c) \uplus \{p\}$ and $\arch'(x) = \arch(x)$ for all $x \neq c'$.

\item Conversely, the \emph{disconnect} operation allows $p \in \Proc_c$ to stop listening to $c$, as long as it does not result in breaking the conditions of Definition~\ref{def:TCA}.

Formally, $\arch(c)$ must have size at least 3.
Furthermore, $p$ must be at the boundary of $c$, i.e. there is exactly one neighbor $q$ that is also in $\arch(c)$.
Finally, there must be a different channel $c' \neq c$ such that both $p$ and $q$ are in $\arch(c')$.
If all of these conditions are satisfied, then for $e \in \nset$ and $p = \procfromlab(e)$ we define $\disc^c(e)(\Arch) = (\arch', \tree)$ with $\arch'(c) = \arch(c) \setminus \{p\}$ and $\arch'(x) = \arch(x)$ for all $x \neq c$.
\end{compactitem}

Consider the tree-like communication architecture in
Figure~\ref{fig:archi}.
Under this architecture, it is possible to connect $p_2$ to $c_2$ by the communication $\connect^{c_1}(1,c_2)$.
It is, however, impossible to connect $p_2$ to $c_3$ as no neighbor of
$p_2$ is connected to it.
Disconnect operations are only possible for $c_1$ and $c_2$ as they
have more than $2$ processes listening to them.
However, $\disc^{c_2}(3)$ is not possible as it leaves $p_4$
disconnected.
On the other hand, $\disc^{c_2}(0)$ is possible.
Similarly, the only disconnect possible from $c_1$ is $p_3$:
$\disc^{c_1}(2)$.
See Figure~\ref{fig:connectdisconnect} for an illustration.

\begin{figure}[bt]
	\begin{center}
	\tikzset{every state/.style={minimum size=15pt}}
		\begin{tikzpicture}[
	auto,
	node distance=1.5cm,
	semithick
	]
	\node[state] (1) {$p_1$};
	\node[state,below left = of 1] (2) {$p_2$};
	\node[state,below right = of 1] (3) {$p_3$};
	\node[state,below left = of 3] (4) {$p_4$};
	\node[state,below right = of 3] (5) {$p_5$};

	\node[state,right = of 1,xshift=35mm] (6) {$p_1$};
\node[state,below left = of 6] (7) {$p_2$};
\node[state,below right = of 6] (8) {$p_3$};
\node[state,below left = of 8] (9) {$p_4$};
\node[state,below right = of 8] (10) {$p_5$};


	\path[->] (1) edge node[above left] {$1$} (2);
	\path[->] (1) edge node[above right] {$2$} (3);
	\path[->] (3) edge node[above left] {$3$} (4);
	\path[->] (3) edge node[above right] {$4$} (5);

	\path[->] (6) edge node[above left] {$1$} (7);
	\path[->] (6) edge node[above right] {$2$} (8);
	\path[->] (8) edge node[above left] {$3$} (9);
	\path[->] (8) edge node[above right] {$4$} (10);

	\draw[rounded corners,color=red] ($(1.north)+(0,0.4)$) --
	($(2.west)+(-0.4,0)$) -- ($(2.south)+(0,-0.4)$) --
	($(1.south)+(0,-0.4)$) -- ($(3.south)+(0,-0.4)$) --
	($(3.east)+(0.4,0)$) -- cycle node[above,color=red] {$c_1$};
	\draw[rounded corners,color=blue] ($(1.south)+(0,-0.3)$) --
	($(3.west)+(-0.3,0)$) -- ($(4.west)+(-0.3,0)$)
	node[left,color=blue] {$c_2$} -- ($(4.south)+(0,-0.3)$) --
	($(3.east)+(0.3,0)$) -- ($(1.north)+(0,0.3)$) --
	($(2.west)+(-0.3,0)$) -- ($(2.south)+(0,-0.3)$) -- cycle;
	\draw[rounded corners,color=teal] ($(3.north)+(0,0.2)$) --
	($(3.west)+(-0.2,0)$) -- ($(5.south)+(0,-0.2)$) --
	($(5.east)+(0.2,0)$) node[right,color=teal] {$c_3$} -- cycle;

	\draw[rounded corners,color=red] ($(6.north)+(0,0.4)$) --
($(7.west)+(-0.4,0)$) -- ($(7.south)+(0,-0.4)$) --
($(6.east)+(0.4,0)$) -- cycle node[above,color=red] {$c_1$};
\draw[rounded corners,color=blue] ($(6.west)+(-0.3,0)$) --
($(8.west)+(-0.3,0)$) -- ($(9.west)+(-0.3,0)$)
node[left,color=blue] {$c_2$} -- ($(9.south)+(0,-0.3)$) --
($(8.east)+(0.3,0)$) -- ($(6.north)+(0,0.3)$)  -- cycle;
\draw[rounded corners,color=teal] ($(8.north)+(0,0.2)$) --
($(8.west)+(-0.2,0)$) -- ($(10.south)+(0,-0.2)$) --
($(10.east)+(0.2,0)$) node[right,color=teal] {$c_3$} -- cycle;

\end{tikzpicture}
		\caption{The results of applying $\connect^{c_1}(1,c_2)$ and
		$\disc^{c_1}(2)$, respectively, to the
		tree-like communication architecture in Figure~\ref{fig:archi}.
	\label{fig:connectdisconnect}}
	\end{center}
\vspace*{-1cm}
\end{figure}
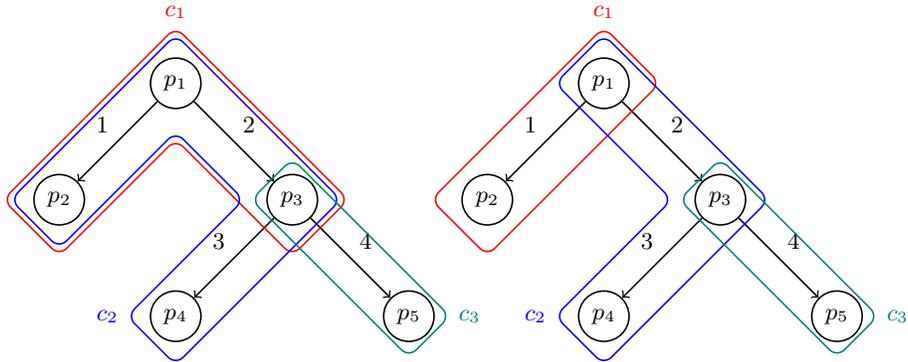


For any of the four introduced operations, we write $\op(\Arch)$  to
denote that $\op$ is valid from $\Arch$ and results in $\op(\Arch)$.
We show that all those operations preserve the properties of a TCA.

\begin{lemmarep}
If $\Arch$ is a TCA 
then $\op(\Arch)$ is also a TCA.
\end{lemmarep}

\begin{appendixproof}
Let us check that all conditions are maintained by all possible operations.
Fix $\Arch = (\CA,\tree)$ and $c$ the channel over which the operation
is made, and let $\op(\Arch)=(\CA', \tree')$.

First, we need to check that $\tree'$ is still a valid spanning tree.
Only the first two operations change the underlying tree.
On a swap between $p$ and its parent $q$, if $q$ is the root of $\tree$ then only the root is changed and the rest of the tree does not change.
Otherwise, $q$ is not the root. Then let $q'$ be the parent of $q$.
Then the edge $(q',q)$ is removed from $\tree$ and replaced by the edge $(q',p)$.
Since by assumption there is an edge between $p$ and $q$, the resulting graph is still a spanning tree.
Similarly, on a successful move operation one edge is replaced by another such that the tree structure is preserved.

For condition~\ref{item:morethantwo} of Definition~\ref{def:TCA}, only the last two operations change channel memberships, and of them the first one can only increase the size of each set.
As for the last one, it only removes a process if there are at least two other processes in that set, so the condition is maintained.

For condition~\ref{item:channelconnectivity}, let us look at all four possible operations.
On a swap operation, the only possible way for this condition to fail is that removing the edge $(q',q)$ somehow disconnects one set $\arch(c')$.
However, this means that both $q$ and $q'$ are listening to $c'$, and thus for the operation to be valid then $p$ should also be contained in $\arch(c')$.
Thus $\arch(c')$ is still connected in the new tree, as $q$ is connected to $p$ which itself is connected to $q'$.
The argument is similar for the move operation.
For the connect operation, it can only add a new process $p$ to a given $\arch(c')$ if $p$ has a neighbor already in that same set.
Therefore the set remains connected.
Finally for the disconnect operation, the requirement is that for $p$ to disconnect from $c$ it must have exactly one neighbor in $\arch(c)$.
Thus removing $p$ does not disconnect the set $\arch(c)$.

Finally, let us show that condition~\ref{item:edgechannel} is maintained by all operations.
On a swap operation, since $\Arch$ is a TCA there is by condition~\ref{item:edgechannel} at least one channel $c'$ shared by $q$ and its parent.
If the swap is defined, then $p$ also listens to this channel.
Therefore the new edge $(q',p)$ is indeed covered by $\arch(c')$.
A similar argument works for the move operation.
The connect operation only adds a process to a channel, and so cannot falsify this condition.
Finally, on a disconnect operation, only the edge $(p,q)$ could be left without a channel covering it after $p$ disconnects from $c$.
However, by assumption both $p$ and $q$ also listen to some different channel $c'$, so the condition remains satisfied.
\qed
\end{appendixproof}

We also show that those operations are universal, meaning that one can go to any TCA using a sequence with only those four operations.

\begin{lemmarep}
If $\Arch$ and $\Arch'$ are in $\TCAset{\CH}{\Proc}$, then there is a
sequence of reconfiguration operations leading from $\Arch$ to $\Arch'$.
\end{lemmarep}

\begin{appendixproof}
	First step is to connect all processes to all channels,
	by incrementally increasing the membership of all channels by inviting neighbors to join using $\connop$ operation 
	until all processes are members.
	Then, there are no channel restrictions on moves and swaps.
	Let $r'$ be the root of $\Arch'$.
	The next step is to move $r'$ from wherever it stands initially in $\Arch$ to the root.
	This is done by a sequence of $\swapop$ operations applied to the channel
	identifying $r'$ until $r'$ becomes the root.
	Then, we move all processes to be direct children of the new root $r'$. 
	That is, for every process that is not yet a child of $r'$, we apply a $\moveop$ operation on it to push it one level above, and repeat this until all processes are children of $r'$.
	Now it is possible to move all the descendants of a process $p$ to
	be the children of $p$ by another sequence of $\moveop$ operations.
	Once the tree is arranged until level $i$ a similar procedure
	arranges level $i+1$.
	We continue all the way to the leaves so that the tree corresponds to the one in $\Arch'$.
	The final step is to disconnect processes from channels they should not listen to.
	Starting from the extremes of the tree, we disconnect one by one every process listening to a channel it should not using $\discop$ operations until we get the correct set for each channel.
\qed
\end{appendixproof}

\section{Reconfiguration languages}
\label{sec:reconfig langs}

\subsection{Automata over reconfiguration languages}
We define \emph{reconfiguration languages}, which are basically languages that also include information about reconfiguration of a TCA.
Formally, we fix $\Proc$ and $\CH$, as well as the set $\nset$ as defined before.
Then we define the set of operations $\Op = \{\nop, \swapop(e),
\moveop(e,e'), \connop(e',c), \discop(e')\}$ with $e \in \nsetp$,
$e' \in \nset$ and $c \in \CH$;
$\nop$ denotes the empty operation that does not change the TCA and is always valid (i.e. no conditions are needed for this operation).
All other operations mirror those defined previously.
Then we let $\Alp = \CH \times \Op$ be the \emph{reconfiguration
alphabet}.
A reconfiguration language is a language over this alphabet.
We note that operations as applied to TCAs include the channel name on
which they were sent in them.
We write $\op^c$ of the operation on TCA that arises from sending $\op$
on channel $c$.
For example, if $\op=\moveop(e,e')$ then
$\op^c=\move^c(e,e')$.

Now we want to define what it means for an automaton to recognize a reconfiguration language.
Intuitively, a generic DFA over this alphabet would be missing some component, namely the underlying TCA that is being affected by the reconfigurations operations.
Therefore, we introduce a new class of automata tailored to the concept of reconfiguration languages.
It can be seen as a DFA where part of the state describes the current TCA, and a transition $(c,\op)$ leads to a new state where the TCA part is the result of applying $\op$ to the previous TCA, provided such an operation is valid.

Formally a \emph{DFA for reconfiguration languages}, shortened to
RL-DFA, is a tuple $\A = (\Alp, S, \TCAs, s_0, \Arch_0, \Delta, F)$,
where $S$ is a set of states, $\TCAs$ is the set of TCA over $\CH$ and
$\Proc$ as defined earlier, $s_0 \in S$ is the initial state, $\Arch_0
\in \TCAs$ is the initial TCA, $\Delta: S \times \Alp \rightharpoonup
S$ is the partial transition function, and $F \subseteq S$ is the set
of accepting states.
As hinted, a \emph{configuration} of $\A$ is a pair $(s,\Arch) \in S
\times \TCAs$ indicating the current state and current TCA of $\A$.
In runs, we require that all transitions in $\Delta$ are valid with
respect to the current TCA, i.e. if the current TCA is $\Arch$ then
$\Delta(s,(c,\op))$ can be taken only if $\op^c(\Arch)$ is defined.
A \emph{run} of $\A$ over word
$w = (c_1,\op_1), \dots ,(c_k,\op_k)\in\Sigma^*$ is the sequence
\[
(s_0, \Arch_0) \xrightarrow{(c_1,\op_1)} (s_1, \Arch_1) \xrightarrow{(c_2,\op_2)} \dots \xrightarrow{(c_k,\op_k)} (s_k,\Arch_k)
\]
where for all $1 \leq i \leq k$, $s_i = \Delta(s_{i-1},  (c_i, \op_i))$
and $\Arch_i = \op_i^{c_i}(\Arch_{i-1})$.
Notice, that, as $\op_i(\Arch_{i-1})$ is undefined if $\op_i$ is not
possible in $\Arch_{i-1}$, this means that all operations in a run are
possible.
So, for some word the associated run does not exist if at some point
the necessary transition does not exist in the automaton or because
from the reached architecture the next operation would not be valid.
We say that configuration $(s,\Arch)$ is reachable, if there is a run
visiting it.
Word $w\in\Sigma^*$ is accepted if the run over $w$ ends in a state $s_k \in F$,
and the language accepted by $\A$, $\Lang(\A)$, is the set of words
accepted by $\A$.

\subsection{Distributivity of a language}\label{sec:distrib}
We start from a centralized language given by an RL-DFA $\A$.
The goal is to build a distributed system able to recognize the same language using only the communication architecture given by following the language.
This distributed system should be in the form of an RAA $\A$, as
defined in
Section~\ref{sec:automata}.

Let us define our notion of distribution. Intuitively,
an RAA distributes an RL-DFA if both automata recognize the same language,
and if they agree, on all runs, on the sequences of communication architectures that they construct.
In runs of the centralized RL-DFA, the communication architecture occurs
explicitly in each configuration;
in runs of the distributed RAA, the communication architecture features
only implicitly, but to be able to define our notion of distribution,
we assume that the neighborhood of a process can be extracted from its
local states, using a family of functions $(N_p)_{p\in \Proc}$.
Finally, one should be able to deduce the state of the centralized automaton by looking at the product of the
local states in the RAA, using a function $D$.

Formally, let $\A = (\Alp, S, \TCAs, s_0, \Arch_0, \Delta, F)$ be a
RL-DFA and let $\A_\|=((S_p)_{p\in \Proc}, (s_p^0)_{p\in \Proc},
(\delta_p)_{p\in \Proc}, (L_p)_{p\in \Proc}, (F_p)_{p\in \Proc})$ be an
RAA over $\Alp$ (slightly adapting the definition to accept this extended alphabet).
We say that $\A_\|$ \emph{distributes} $\A$ if
there  is a family $(N_p:S_p \to \nset \times 2^{\nsetp})_{p\in \Proc}$ of functions and a function $D: \prod_{p \in \Proc} S_p \to S$ such that
for all words
$w=(c_1, \op_1),\ldots ,(c_k, \op_k)\in\Alp^*$
with
\[
\rho = (s_0, \Arch_0) \xrightarrow{(c_1, \op_1)} (s_1, \Arch_1) \xrightarrow{(c_2, \op_2)} \dots \xrightarrow{(c_k, \op_k)} (s_k, \Arch_k)
\]
being the run of $\A$ on $w$, we have that the
run of $\A_{\parallel}$ on $w$ is of shape
\[
\rho_\| = (s_p^0) \xrightarrow{(d_1,(c_1,\op_1))} (s_p^1)
\xrightarrow{(d_2,(c_2,\op_2))} \dots \xrightarrow{(d_k,(c_k,\op_k))}
(s_p^k),
\]
where for all $0\leq i\leq k$, we have
\begin{compactenum}
	\item $\Arch_i=(\left(\arch: c \mapsto \{p \in \Proc \mid c \in
	L_p(s^i_p)\}\right),\maketree{(N_p(s^i_p))_{p \in \Proc}})$,
	that is, $\Arch_i$ is the architecture arising from collecting the
	listening functions of all processes and building the tree from their
	local knowledge,
	\item $s_i=D((s^i_p)_{p\in\Proc})$, and
	\item $s_k\in F$ if and only if $\exists d \in \Data. \forall p \in
	\Proc. F_p((s^k_p),d)=\top$.
\end{compactenum}
Recall that $d_1 d_2\ldots d_k\in \Data^*$ is the data used
for coordination when communicating while reading $w$.
Finally, if the run of $\A$ on $w$ is undefined, then so is the run of
$\A_{\parallel}$.
Note that by definition $L(\A)= L(\A_{\parallel})$
whenever $\A_{\parallel}$ distributes $\A$.

\subsection{Diamond closed RL-DFA}
In general, it is not the case that any RL-DFA can be distributed.
For instance, one can create a language which only accepts the word $(c_1,\nop),
(c_2,\nop)$ for two channels $c_1$ and $c_2$ that do not share a common
process;
such a pattern can not be accepted in a distributed way as processes in
$c_2$ have no way to ``wait'' until the communication in $c_1$ happens.
A distributed language would thus have to accept also the communication $(c_2,\nop),(c_1,\nop)$.
Intuitively, for a centralized language to be distributable, the existence and result of $\Delta(s,(c,\op))$ should only depend on the history of processes participating in the communication on $c$.
In other words, if no process can distinguish between two different communications, then those two communications should end in the same configuration.
In the non-reconfiguration case, i.e. for DFA where the communication architecture is fixed, this property can be ensured by checking syntactically that the DFA has the $I$-diamond property: for any two \emph{independent} channels $a$ and $b$, meaning that there is no process listening to both $a$ and $b$, we have that $\Delta(s,ab) = \Delta(s,ba)$ for all states $s$.
Obviously this does not work in the case of RL-DFA, as two channels can be independent in one TCA but not in another; thus we need to redefine what it means to be independent.

Given a TCA $\Arch = (\CA,\tree)$, we say that two actions $(c_1,\op_1)$ and $(c_2,\op_2)$ are $\Arch$-dependent if there exists $p \in \Proc$ such that $p$ listens to $c_1$ in either $\Arch$ or $\op_1^{c_1}(\Arch)$ \emph{and} process $p$ listens to $c_2$ in either $\Arch$ or $\op_2^{c_2}(\Arch)$.
Otherwise, we say that $(c_1,\op_1)$ and $(c_2,\op_2)$ are $\Arch$-independent.
In other words, $(c_1,\op_1)$ and $(c_2,\op_2)$ are $\Arch$-dependent iff there is some process $p$ that can distinguish between the two communications $(c_1,\op_1),(c_2,\op_2)$ and $(c_2,\op_2),(c_1,\op_1)$ starting from $\Arch$.
Alternatively, using the properties of TCAs, $(c_1,\op_1)$ and $(c_2,\op_2)$ are $\Arch$-independent iff there exists an edge $e$ in $\tree$ that if removed splits $\tree$ in two smaller TCA $\Arch_1 = (\CA_1, \tree_1)$ and $\Arch_2 = (\CA_2, \tree_2)$ and such that $(c_i, \op_i)$ is valid in the restricted TCA $\Arch_i$ for $i \in \{1,2\}$.
In simpler terms, those two actions are independent because they operate in disjoint parts of the tree.
It is possible to see that the two definitions are equivalent by
considering the set of processes participating in each, observing that
they are disjoint, and choosing an edge that disconnects them.

From now on, we assume that the centralized RL-DFA $\A$ is diamond
closed:
\begin{definition}[Diamond closed RL-DFA]
An RL-DFA $\A$ is said to be \emph{diamond closed} when
for all reachable configurations $(s,\Arch) \in S \times \TCAs$ and all
$\Arch$-independent actions $(c_1,\op_1)$ and $(c_2,\op_2)$,
\begin{align*}
\Delta(s, (c_1,\op_1)(c_2, \op_2)) = \Delta(s, (c_2, \op_2)
(c_1, \op_1)).
\end{align*}
\end{definition}

We extend the definition of independence to sequences, so that $w_1,w_2\in\Sigma^*$ are independent if all possible interleavings are
indistinguishable.
Formally, we define the notion recursively.
The empty sequence $\varepsilon$ is $\Arch$-independent with any
word $w$ for any TCA $\Arch$, and $w_1 = (c_1, \op_1) \cdot
w_1'$ is $\Arch$-independent with $w_2 = (c_2, \op_2) \cdot w_2'$ if
$(c_1,\op_1)$ and $(c_2,\op_2)$ are $\Arch$-independent and $w_1'$ and
$w_2$ are $\op_1^{c_1}(\Arch)$-independent and $w_1$ and $w_2'$ are
$\op_2^{c_2}(\Arch)$-independent.
If $\A$ is diamond closed with $(s,\Arch)$ being a reachable configuration, and $w_1$ and $w_2$ are two $\Arch$-independent words, then we have that $\Delta(s, w_1 w_2) = \Delta(s, w_2 w_1) = \Delta(s, w)$ for any $w$ that is an interleaving of $w_1$ and $w_2$.

We end on a property satisfied by diamond closed RL-DFA that will be used in the next section.
Fix a reachable configuration $(s,\Arch)$ and let $w_1$ and $w_2$ be two $\Arch$-independent words.
Let $s_1 = \Delta(s,w_1)$, $s_2 = \Delta(s,w_2)$, and let $C$ be the
set of channels involved in $w_2$.
We claim that one can compute the diamond closure $s' =
\Delta(s,w_1w_2)$ knowing only $s$, $s_1$, $s_2$, and $C$.

\begin{lemmarep}[\cite{KrishnaM13,cori1993asynchronous}]
Let $\A$ be a diamond closed RL-DFA. There is a function $\diam: S^3 \times 2^{\CH} \to S$ such that if $(s,\Arch), s, s_1, w_1, s_2, w_2, C, s'$ are all defined as described above, then $\diam(s,s_1,s_2,C) = s'$.
\end{lemmarep}

\begin{appendixproof}
One can compute $\diam(s,s_1,s_2,C)$ as follows: first, guess a TCA
$\Arch'$ such that $(s,\Arch')$ is reachable in $\A$.
The channels in $C$ lie in one part of $\Arch'$; let $C'$ be a set of channels located in a disjoint part of $\Arch'$.
Then guess two words $w_1'$ and $w_2'$ restricted to channels in $C'$ and $C$ respectively and such that $s_i = \Delta(s, w_i')$ for $i \in \{1,2\}$.
We know this is possible because $\Arch$, $w_1$ and $w_2$ satisfy those conditions.
Let $s'' = \Delta(s_1,w_2')$, and let us show that $s''$ is the desired output i.e. $s'' = s'$.
First, by hypothesis we have that $s' = \Delta(s,w_1w_2) = \Delta(s_2,w_1)$.
We also know that $\Delta(s,w_2') = s_2$, so $s' = \Delta(s,w_2' w_1)$.
Then, note that $w_2'$ and $w_1$ are $\Arch'$ independent due to their set of channels $C$ and $C'$ being separated in $\Arch'$.
Thus, we can use the diamond property: $s' = \Delta(s,w_2' w_1) = \Delta(s,w_1 w_2')$.
We conclude by seeing that $\Delta(s,w_1 w_2') = \Delta(s_2,w_2') = s''$.
\qed
\end{appendixproof}


To understand the need for this function, let us give a quick intuition for the construction in the case of a fixed tree with binary channels given in \cite{KrishnaM13}.
There, the goal of the construction is to distribute a diamond closed DFA $\A$ into an asynchronous automaton.
Each process will maintain a state of $\A$ that is the most recent state visited by $\A$ according to the knowledge of this process.
This does not necessarily coincide with the actual state of $\A$, as some communication happening without this process involved can change the state of $\A$.
Then when two processes communicate they will share their information in order to update their knowledge about the state of $\A$.
To that end, both processes need to agree on what was the state the last time they communicated with each other and what each side has been doing independently since that time.
With a tree structure, this is actually easy: each child also maintain the state of $\A$ during the last communication with their parent, and by the structure of the tree everything that happened with the child since then is independent from what happened with the parent.
So on a communication between a parent and one of its children, let $s$ be the state after their last communication, $s_1$ and $s_2$ be the most up-to-date states known by the parent and its child respectively, and $C$ the set of channels belonging in the subtree rooted in the child.
Then one can combine the information known by the parent and the child by computing $\diam(s,s_1,s_2,C)$, without needing to know the full sequences $w_1,w_2$ that were done separately on each side since $s$.

In our setting, communications can include more than two participants.
To that end, we naturally extend the $\diam$ function to work on trees.
Let $\tree$ be a tree where each node is labeled by a pair $(s^1,s^2)$ of states and a set $C$ of channels, with the intuition that $s^1$ is the last state in common that the process at this node has with its parent, $s^2$ is the most recent state this process knows, and $C$ is the set of channels that can be found in the subtree rooted in that node.
Let $p$ be the root of $\tree$, labeled by $(s^1_r,s^2_r,C_r)$, and assume that $p$ has $k$ children labeled $(s^1_i,s^2_i,C_i)$ and leading to tree $\tree_i$ for $i \in \{1,\dots,k\}$.
Then we define the function $\diamtree$ recursively in the following way:
\begin{align*}
\diamtree(\tree) =
\begin{cases}
s^2_r &\text{ if $k = 0$,}\\
\diam(s^1_1, \diamtree(\tree \setminus \tree_1), \diamtree(\tree_1), C_1) &\text{ otherwise.}
\end{cases}
\end{align*}
Basically we just iterate the $\diam$ function on all children of a parent, then repeat in all levels of the tree until we get to the root.
Note that the $\mathrm{state}(p)$ function from \cite{KrishnaM13} is equivalent to $\diamtree(\tree_p)$ where $\tree_p$ is the tree rooted in $p$.

We now have all the tools needed to state our construction.

\section{Distribution of Diamond Closed RL-DFA}
\label{sec:diamond closed}

We show that the language of a diamond closed RL-DFA can
be distributed to collaborating RAAs.
The different processes of the RAA maintain their knowledge of the
architecture as well as their knowledge regarding the states of the
RL-DFA that the computation has reached.
We then show that based on the communication that each process
participates in, they can keep this information updated so that it is
maintained correctly and corresponds to the respective configurations of
the RL-DFA.

More formally we have the following.
We fix the finite sets $\Proc$ and $\CH$ of processes
and channels, respectively, and let $n=|\Proc|$.
Given an RL-DFA $\A = (\Alp, S, \TCAs, s_0, \Arch_0, F, \Delta)$, the goal of this
section is to show how to build an RAA $\A_\|=((S_p)_{p\in \Proc},
(s_p^0)_{p\in \Proc},
(\delta_p)_{p\in \Proc}, (L_p)_{p\in \Proc}, (F_p)_{p\in \Proc})$ that
distributes $\A$.
Toward this goal, we describe the maintained information in the
states of each process.

\subsection{Local information to be maintained}
Each process $p$ stores in its states $S_p$ some local information about the communication architecture as well as some information about the state of the centralized automaton.
Recall that we associate each edge in the tree with an integer in $\nsetp$.
A state $s_p\in S_p$ then is a tuple of the form $\langle (s^1, s^2), L, (\pedge, \cedges), \cc, \dc \rangle$
where:
\begin{compactitem}
\item $(s^1, s^2) \in S\times S$ is a pair of states of the centralized
automaton, with $s^1$ being the state during the last communication
involving both $p$ and its parent, and $s^2$ the most up-to-date state
known by $p$.
\item $L \subseteq \CH$ is the set of channels that $p$ listens to, i.e. $L_p(s_p) = L$.
In the corresponding TCA, it means that $c \in L$ iff $p \in \arch(c)$.
\item $(\pedge, \cedges) \in \nset \times 2^{\nsetp}$ are the edge leading to the parent of $p$ (or $0$ for the root) and the set of edges leading to the children of $p$ respectively.
The edge $\pedge$ serves as a proxy for the identity of $p$ and its position in the tree.
Let $\pcedges$ denote the set $\{\pedge\} \uplus \cedges$.
\item $\cc: \pcedges \to 2^L$ is the set of \emph{connected channels} shared by $p$ and each of its neighbors.
For instance, $\cc(\pedge)$ is exactly the set of channels that both $p$ and its parent listen to.
As we want to maintain a TCA, these sets should never be empty by condition~\ref{item:edgechannel} of Definition~\ref{def:TCA}.
\item $\dc: \pcedges \to 2^{\CH \setminus L}$ indicates for each direction leading away from $p$ which set of channels can be found in that direction that $p$ does not listen to ($\dc$ stands for \emph{disconnected channels}).
Intuitively, we have $c \in \dc(e)$ for some $e \in \pcedges$
if all processes that listen to channel $c$ can be found in the subtree obtained by following the edge whose label is $e$.
By condition~\ref{item:channelconnectivity} of Definition~\ref{def:TCA}, any channel $c\in \CH\setminus L$ can only be found in at most one direction, otherwise $\arch(c)$ would be disconnected due to $p$ not being contained in this set; and by condition~\ref{item:morethantwo}, $c$ can be found in at least one direction.
Thus $\dc$ can also be seen as a partition of $\CH \setminus L$ over $\pcedges$.

\end{compactitem}

The first component, that is the pair of states of the centralized automaton, is identical to the construction in \cite{KrishnaM13}.
The rest of the components are used to represent a TCA in a distributed way and help maintain it under reconfiguration operations.

Recall that we fixed $\A = (\Alp, S, s_0, F, \Delta)$ and an initial TCA $\Arch_0 = (\CA_0,\tree_0)$.
Let $\tree_0 = (\Proc, r_0, E_0, \edgelab_0)$.
By abuse of notation, we extend $\edgelab_0$ to set of edges.
The initial state of process $p \neq r_0$ is
\[s_p^0 = \langle (s_0, s_0), \arch_0^{-1}(p), (\edgelab_0(\pedge(p)), \edgelab_0(\cedges(p))),
\cc_p^0, \dc_p^0 \rangle\]
where $\cc_p^0$ and $\dc_p^0$ are defined as follows:
\[
\begin{array}{l c l}
\cc_p^0(e) & = & \{c \in \CH \mid p,q \in \arch_0(c) \text{ where $q$
is the neighbor of $p$ along edge $\edgelab^{-1}(e)$}\} \\
\dc_p^0(e) & = &  \left \{
c \in \CH \left |
\begin{array}{l} p \notin \arch_0(c) \land \text{
there is a path in $\tree_0$ from $p$ starting}\\
\text{with edge $\edgelab^{-1}(e)$ reaching some $q$ with $q \in
\arch_0(c)$}
\end{array} \right . \right \}
\end{array}
\]
For $p = r_0$, component $\edgelab_0(\pedge(p))$ (which is undefined)
is set to $0$ instead; all other components are defined same as above
with $\cc_p(0) = \dc_p(0) = \emptyset$.

We are now ready to start the description of the transition function $\Delta_p$ of each process $p$.
Transitions will vary based on which reconfiguration operation occurs, and the data sent along the communication will reflect this.
However, there is some basic common data that is needed for synchronization, even when no reconfiguration happens.
We start by describing this data in the next section.

\subsection{Common synchronization data}
Under every communication, processes participating in the communication
share as part of the data the information they have regarding the
subtree of processes participating in the communication.
Namely, on a communication in channel $c$ only the processes in
$\arch(c)$ participate.
The shared data will allow every process to independently build the
subtree of $c$'s participants,
check that there is no inconsistency in the data sent, and then compute
the newest state by taking the diamond closure over the states sent by
all other processes in the correct order.
Based on this computation, and potentially requiring additional data,
each process updates their state.

Formally, the data sent is of the following form:
\begin{align*}
	&\SyncData = \{((s_1^1,s_1^2),\pedge_1,\cedges_1,C_1), \dots,
	((s_k^1,s_k^2),\pedge_k,\cedges_k,C_k) \mid \\
	&\quad 2 \leq k \leq n\ \land\ \forall 1 \leq i \leq k. (s_i^1,s_i^2)
	\in S^2, \pedge_i \in \nset, \cedges_i \subseteq \nsetp, C_i \subseteq \CH\}
\end{align*}
A process $p$ agrees only to data of this form that is consistent
both with $p$'s local view and with itself; otherwise $p$ will simply
block this communication.
Consider a data $\sync =
((s_1^1,s_1^2),\pedge_1,\cedges_1,C_1), \dots,
((s_k^1,s_k^2),\pedge_k,\cedges_k,C_k)  \in \SyncData$, we define
$\treesync = \maketree{(\pedge_i, \cedges_i)_{1 \leq i \leq k}}$.
Each node of $\treesync$ is associated to one index $i$ between $1$ and $k$, and is labeled by $(s_i^1,s_i^2,C_i)$.
For all processes $p \in \Proc$, given $p$'s a local state $s_p =
\langle (s^1, s^2),
L, (\pedge, \cedges), \cc, \dc \rangle$, we say that data
\begin{equation*}
\sync =
((s_1^1,s_1^2),\pedge_1,\cedges_1,C_1), \dots,
((s_k^1,s_k^2),\pedge_k,\cedges_k,C_k)  \in \SyncData
\end{equation*}
is \emph{consistent} with $s_p$ for channel $c$, if $c
\in L$ and the following conditions hold:
\begin{compactitem}
	\item The subtree $\treesync$ is defined.
	\item There are no two $i \neq j$ such that $\pedge_i = \pedge_j$.
	\item There is $i \leq k$ such that $(s_i^1,s_i^2) =
	(s^1,s^2)$, $\pedge_i = \pedge$, $\cedges_i = \{e \in \cedges \mid c
	\in \cc(e)\}$, and for all $e \in \cedges_i$, there exists $j \leq k$ such that
	$\pedge_j = e$ and $C_j = \dc(e)$.
\end{compactitem}
We denote by $\statefromsync(\sync) = \diamtree(\treesync)$ the newest
up-to-date state of the centralized automaton $\A$ according to
processes that listen to $c$ agreeing on data $\sync$.
It follows that the information in the synchronization data is
sufficient for computing the diamond closure for all the
processes participating in the communication.

We describe the transition function for all possible reconfiguration
operations.

\subsection{Transition on reconfiguration operation}
\paragraph*{With no reconfiguration operation ($\nop$).}
Under a $(c, \nop)$ communication, the TCA remains unchanged.
In the distributed automaton, this is reflected in that only the first
component of each local state needs to be updated.
As the TCA is unchanged, no additional information is required in
addition to the synchronization data.

Consider $p$'s local state $s_p =
\langle (s^1, s^2),
L, (\pedge, \cedges), \cc, \dc \rangle$.
Let $\statefromsync(\sync)=\diamtree(\treesync)$.
Then, $\delta_p(s_p, (\sync, (c, \nop)))$
is defined if and only if $\sync$ is consistent with $s_p$ for $c$ and $s' =
\Delta(\statefromsync(\sync), (c,\nop))$ is
defined.
We put
\[
\begin{array}{c}
\delta_p(s_p, (\sync, (c, \nop)))=
\langle
(\idifroot{c}{s_p}{s'},
s'), L, (\pedge, \cedges), \cc, \dc \rangle, \mbox{ where}\\[8pt]

	\idifroot{c}{s_p}{s'} =
\left \{
\begin{array}{l r}
		s^1 &\text{ if $p$ is the root of the subtree of $c$ (i.e. $c
		\notin \cc(\pedge)$),}\\
		s' &\text{ otherwise.}
		\end{array}\right .
	\end{array}
\]

\paragraph*{On $\swapop(e)$.}
A communication $(c, \swapop(e))$ along an edge $e = (p,q)$ swaps the position of a non-root process $p$ and its parent $q$ in the TCA.
Note that $p$ is identified by being the unique process whose parent edge is $e$, and similarly $q$ is the unique process for which $e$ is a child edge.
Recall that this operation is only possible if $p$ listens to every
channel that is shared by $q$ and the parent of $q$, so as to not break
connectivity after swapping.
Thus the distributed automaton should also ensure that this condition is satisfied; this requires additional information sent in the data during communication.
If this condition is satisfied, then $p$ and $q$ swap their upward edges and update their $\cc$ and $\dc$ components accordingly.

Formally, for any process $\pi$ with local state $s_\pi = \langle (s^1, s^2), L, (\pedge, \cedges), \cc, \dc \rangle$
such that $c \in L$,
$\delta_\pi$ is defined on communication $(c, \swapop(e))$ with $e \neq 0$ and with data of the form $d = (\sync, C, D) \in \SyncData \times (2^{\CH})^2$ if:
\begin{compactitem}
\item $\sync$ is consistent with $s_\pi$ for $c$,
\item the edge $e$ must be part of the subtree $\treesync$,
\item if $e \in \cedges$, $C$ is the set of channels such that $C =
\cc(\pedge)$, that is $C$ is the set of channels shared by $q$ and its
parent, and $D = \dc(\pedge)$,
\item if $e = \pedge$, $C \subseteq L$, that is $p$ listens to every
channel in $C$,
\item $s' = \Delta(\statefromsync(\sync), (c, \swapop(e)))$ must be defined in $\A$.
\end{compactitem}
If those conditions are fulfilled, let $q$ be the unique index in $\sync$ such that $e \in \cedges_q$ (which every process can compute given the data in $\sync$).
Then we define $\delta_\pi(s_\pi, (d, (c, \swapop(e))))$ as follows:

\begin{tabular}{| l | c | c | c |}
\hline
new state ... & if $e = \pedge$ (process $p$) & if $e \in \cedges$ (process $q$) & otherwise \tabularnewline
\hline
$(s'^1,s'^2)$ & \multicolumn{3}{ c |}{$(\idifroot{c}{s_\pi}{s'}, s')$}\tabularnewline
\hline
$L'$ & \multicolumn{3}{ c |}{$L$}\tabularnewline
\hline
$(\pedgenew, \cedgesnew)$ & $(\pedge_q, \cedges \uplus \{e\})$ & $(e, \cedges \setminus \{e\})$ & $(\pedge, \cedges)$ \tabularnewline
\hline
$\cc'$ & $\cc'(\pedge_q) = C$ & $\cc'(e) = \cc(e)$ & $\cc$ \tabularnewline
 &~ $\forall e \in \cedgesnew. \cc'(e) = \cc(e) $ ~&~ $\forall e \in \cedgesnew. \cc'(e) = \cc(e) $ ~& \tabularnewline
\hline
$\dc'$ & $\dc'(\pedge_q) = D$ & $\dc'(e) = \dc(e) \cup D$ & $\dc$ \tabularnewline
 & $\dc'(e) = \dc(e) \setminus D$ &~ $\forall e \in \cedgesnew. \dc'(e) = \dc(e) $ ~& \tabularnewline
 &~ $\forall e \in \cedges. \dc'(e) = \dc(e)$ ~&  & \tabularnewline
 \hline
\end{tabular}

\paragraph*{On $\moveop(e,e')$.}
A $\moveop(e,e')$ changes the position of the process $p$ whose parent edge is $e$ to become one of $q$'s children (where $q$ is the process whose parent edge is $e'$), assuming that $q$ was a neighbor of $p$'s original parent, which we call $q'$.
Another condition is that any channel shared by $p$ and its original parent $q'$ must also be shared by its new parent $q$; this set of channels needs to be sent as part of the data as in the previous operation.
Notably, $p$ itself does not need to be part of the communication, as nothing changes in the TCA from its point of view.
As for process $q$, it needs to add $e$ to the set of its children edges, and know what set of channels are found in $p$'s subtree to update its state correctly; those will be provided in the communication by $q'$.
Process $q'$ itself removes $e$ from its children edges, and
changes its $\dc$ to reflect $p$'s new position.

Let $\pi$ be any process listening to $c$ and let its local state be $s_\pi = \langle (s^1, s^2), L,\\
(\pedge,\cedges), \cc, \dc \rangle$.
Then the conditions for $\delta_\pi(s_\pi, (c, \moveop(e,e')), d)$ to be defined with data $d = (\sync, C) \in \SyncData \times 2^{\CH}$ are:
\begin{compactitem}
\item $\sync$ is consistent with $s_\pi$ for $c$,
\item edge $e'$ must be part of the subtree $\treesync$ (but not
necessarily $e$),
\item if $e \in \cedges$, $C = \cc(e)$ is the set of channels shared by
$p$ and its parent,
\item if $e' = \pedge$, $C \subseteq L$, i.e. $q$ listens to every
channel in $C$,
\item $s' = \Delta(\statefromsync(\sync), (c, \moveop(e,e')))$ must be defined in $\A$.
\end{compactitem}
Let $D = C_{q'}$ from $\sync$ where $q'$ is the unique index such that $e \in \cedges_{q'}$, and let $e_{qq'}$ be the edge joining $q$ and $q'$.
Then we have $\delta_\pi(s_\pi, (d, (c, \moveop(e,e'))))$:

\begin{tabular}{| l | c | c | c |}
\hline
new state & if $e' = \pedge$ (process $q$) & if $e \in \cedges$ (process $q'$) & otherwise \tabularnewline
\hline
$(s'^1,s'^2)$ & \multicolumn{3}{ c |}{$(\idifroot{c}{s_\pi}{s'}, s')$}\tabularnewline
\hline
$L'$ & \multicolumn{3}{ c |}{$L$}\tabularnewline
\hline
$(\pedgenew, \cedgesnew)$ & $(\pedge, \cedges \uplus \{e\})$ & $(\pedge, \cedges \setminus \{e\})$ & $(\pedge, \cedges)$ \tabularnewline
\hline
$\cc'$ & $\cc'(e) = C$ & $\forall e \in \pcedgesnew. \cc'(e) = \cc(e) $ & $\cc$ \tabularnewline
 & $\forall e \in \pcedges. \cc'(e) = \cc(e)$ &  & \tabularnewline
\hline
$\dc'$ & $\dc'(e) = D$ & $\dc'(e_{qq'}) = \dc(e_{qq'}) \cup D$, & $\dc$ \tabularnewline
 & $\forall e \in \pcedges. \dc'(e) = \dc(e)$ & $\forall e \in \pcedgesnew \setminus \{e_{qq'}\}. \dc'(e) = \dc(e) $ & \tabularnewline
 \hline
\end{tabular}

\paragraph*{On $\connop(e,c')$.}
On a $\connop(e,c')$ communication sent on channel $c$, and with $p$ being the process whose parent edge is $e$, $p$ starts listening to $c'$.
This requires the existence of a process, say $q$, that is a neighbor of $p$ and that listens to both $c$ and $c'$.
To check this, an additional edge will be sent as part of the data, with that edge representing process $q$ (which will block the communication if it does not fit the required conditions).
As a result of this, $p$ and $q$ will have to update their $\cc$ and $\dc$ components.

Again, let $\pi$ be a process listening to $c$ and with local state $s_\pi = \langle (s^1, s^2), L,\\ 
(\pedge,\cedges), \cc, \dc \rangle$.
Let $d = (\sync, e') \in \SyncData \times \nset$, we have that $\delta_\pi(s_\pi, (c, \\
\moveop(e,c')), d)$ is defined if:
\begin{compactitem}
\item $\sync$ is consistent with $s_\pi$ for $c$,
\item edges $e$ and $e'$ are neighbors in the subtree $\treesync$,
\item if $e = \pedge$ then $c' \notin L$,
\item if $e' = \pedge$ then $c' \in L$,
\item $s' = \Delta(\statefromsync(\sync), (c, \connop(e,c')))$ must be defined in $\A$.
\end{compactitem}
With $p$ being the process represented by $e$ and $q$ by $e'$, let $e_{pq}$ be the edge joining $p$ and $q$, which is either $e$ if $q$ is the parent of $p$ and $e'$ otherwise.
Then we have $\delta_\pi(s_\pi, (d, (c, \moveop(e,c'))))$ as follows:

\begin{tabular}{| l | c | c | c |}
\hline
new state & if $e = \pedge$ (process $p$) & if $e' = \pedge$ (process $q$) & otherwise \tabularnewline
\hline
$(s'^1,s'^2)$ & \multicolumn{3}{ c |}{$(\idifroot{c}{s_\pi}{s'}, s')$}\tabularnewline
\hline
$L'$ & $L \uplus \{c'\}$ & \multicolumn{2}{ c |}{$L$}\tabularnewline
\hline
$(\pedgenew, \cedgesnew)$ & \multicolumn{3}{ c |}{$(\pedge, \cedges)$} \tabularnewline
\hline
$\cc'$ & \multicolumn{2}{ c |}{$\cc'(e_{pq}) = \cc(e_{pq}) \uplus \{c'\}$} & $\cc$ \tabularnewline
 & \multicolumn{2}{ c |}{$\forall e \in \pcedges \setminus \{e_{pq}\}. \cc'(e) = \cc(e) $}  & \tabularnewline
\hline
$\dc'$ & $\dc'(e_{pq}) = \dc(e_{pq}) \setminus \{c'\}$ & \multicolumn{2}{ c |}{$\dc$} \tabularnewline
 &~ $\forall e \in \pcedges \setminus \{e_{pq}\}. \dc'(e) = \dc(e)$ ~& \multicolumn{2}{ c |}{} \tabularnewline
 \hline
\end{tabular}

\paragraph*{On $\discop(e)$.}
Finally, a $\discop(e)$ operation on channel $c$ makes the process $p$ whose parent edge is $e$ disconnect from $c$.
Several conditions need to be satisfied for this to be possible.
First, exactly one neighbor $q$ of $p$ must be in $c$, which can be checked from the tree $T_c$.
There must be at least two processes other than $p$ listening to $c$, which again can be seen in the size of $T_c$.
Lastly, there must be at least one other channel $c'$ shared by $p$ and $q$; both of those can check this by looking at their $\cc$ component.
If those are satisfied, then $p$ disconnects fron $c$ and $p$ and $q$ update their $\cc$ and $\dc$.

Let $\pi$ be a process listening to $c$ and let its local state be $s_\pi = \langle (s^1, s^2), L, (\pedge,\\
\cedges), \cc, \dc \rangle$.
$\delta_\pi(s_\pi, (c, \moveop(e,e')), d)$ is defined with data $d = \sync \in \SyncData$ if:
\begin{compactitem}
\item $\sync$ is consistent with $s_\pi$ for $c$,
\item $e$ is in $\treesync$ and:
\begin{compactenum}
\item Either $e$ is the parent edge of a leaf, then let $e_{pq} = e$ and we must have that there exists some $c' \neq c$ such that $c' \in \cc(e_{pq})$ if $e_{pq} \in \pcedges$.
\item Or $e$ is the parent edge of the root and the root has only one children edge $e'$, then let $e_{pq} = e'$ and we require the same condition as above.
\end{compactenum}
\item $|\treesync| \geq 3$,
\item $s' = \Delta(\statefromsync(\sync), (c, \discop(e)))$ must be defined in $\A$.
\end{compactitem}
With $e_q$ being the parent edge of the only neighbor of $p$ in $\treesync$, we define $\delta_\pi(s_\pi, (d, (c, \moveop(e,c'))))$ as:

\begin{tabular}{| l | c | c | c |}
\hline
new state & if $e = \pedge$ (process $p$) & if $e_q = \pedge$ (process $q$) & otherwise \tabularnewline
\hline
$(s'^1,s'^2)$ & \multicolumn{3}{ c |}{$(\idifroot{c}{s_\pi}{s'}, s')$}\tabularnewline
\hline
$L'$ & $L \setminus \{c\}$ & \multicolumn{2}{ c |}{$L$}\tabularnewline
\hline
$(\pedge, \cedgesnew)$ & \multicolumn{3}{ c |}{$(\pedge, \cedges)$} \tabularnewline
\hline
$\cc'$ & \multicolumn{2}{ c |}{$\cc'(e_{pq}) = \cc(e_{pq}) \setminus \{c\}$} & $\cc$ \tabularnewline
 & \multicolumn{2}{ c |}{$\forall e \in \pcedges \setminus \{e_{pq}\}. \cc'(e) = \cc(e) $}  & \tabularnewline
\hline
$\dc'$ & $\dc'(e_{pq}) = \dc(e_{pq}) \uplus \{c\}$ & \multicolumn{2}{ c |}{$\dc$} \tabularnewline
 &~ $\forall e \in \pcedges \setminus \{e_{pq}\}. \dc'(e) = \dc(e)$ ~& \multicolumn{2}{ c |}{} \tabularnewline
 \hline
\end{tabular}

\paragraph{Data contents.}
With $\delta_p$ now defined on all operations, we can formally state what is the set of data contents:
\[\Data = \SyncData \cup (\SyncData \times (2^\CH)^2) \cup (\SyncData \times 2^\CH) \cup (\SyncData \times [n])\]

\subsection{Acceptance and correctness}
The last component that remains to be defined is the acceptance
condition $F_p$.
To that end, we define \emph{global} synchronization data, which is basically some synchronization data $\sync$ that includes all processes instead of being restricted to only those participating in a communication.
Formally, we say that
\[
\sync =
((s_1^1,s_1^2),\pedge_1,\cedges_1,C_1), \dots,
((s_n^1,s_n^2),\pedge_n,\cedges_n,C_n)  \in \SyncData
\]
is globally consistent with $s_p$ if:
\begin{compactitem}
\item $\treesync$ is defined,
\item $\forall i. \pedge_i = i-1$, i.e. each process is seen once in order of their parent edge,
\item with $i = \pedge_p + 1$, we have $s_i^1 = s_p^1, s_i^2 = s_p^2, \pedge_i = \pedge_p, \cedges_i = \cedges_p$,
\item $\forall e \in \cedges_p. C_{e+1} = \dc_p(e)$.
\end{compactitem}
We then put $F_p(s_p,\sync) = \top$ if and only if $\sync$ is globally consistent with $s_p$ and $\statefromsync(\sync) \in F$.

Let $\A_p = (S_p, s_p^0, \delta_p, L_p, F_p)$ as defined throughout
this section and let
$\A_\|=((S_p)_{p\in \Proc},
(s_p^0)_{p\in \Proc},
(\delta_p)_{p\in \Proc}, (L_p)_{p\in \Proc}, (F_p)_{p\in \Proc})$.
We state our main result as follows.
\begin{theorem}\label{thm:correctness}
If $\A$ is diamond closed, then $\A_\|$ distributes $\A$.
\end{theorem}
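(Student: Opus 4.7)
\textbf{Setting up the morphism.} I first make the functions $N_p$ and $D$ explicit. For each process $p$, $N_p(s_p) = (\pedge, \cedges)$ is read directly off the state tuple; by construction this matches the definition of a neighborhood. For $D$, given a global state $(s_p^i)_{p \in \Proc}$ I assemble the global synchronization datum $\sync$ containing one entry per process (as in the definition of $F_p$) and set $D((s_p^i)_{p \in \Proc}) = \statefromsync(\sync) = \diamtree(\treesync)$. The architecture-recovery clause (condition~1 of distribution) is then an immediate syntactic consequence of the invariant stated next.

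\textbf{The invariant.} The theorem is proved by induction on the length $i$ of the prefix of $w$ processed. The inductive invariant is that after $i$ steps, for every process $p$ with local state $s^i_p = \langle (s^1, s^2), L, (\pedge, \cedges), \cc, \dc\rangle$:
(a) $L = \arch_i^{-1}(p)$ and $(\pedge, \cedges)$ encodes the adjacency of $p$ in the spanning tree of $\Arch_i$;
(b) $\cc(e)$ is the set of channels shared by $p$ and the neighbor across $e$ in $\Arch_i$, and $\dc(e)$ is the set of $c \notin L$ with $\arch_i(c)$ lying entirely on the far side of $e$;
(c) $s^2$ equals $\diamtree$ applied to the subtree of $\Arch_i$ rooted at $p$ with labels drawn from the current local states, so that applying $\diamtree$ to the entire tree yields $s_i$; and
(d) $s^1$ equals $s_j$ for the largest $j \le i$ at which step $j$ was a communication involving both $p$ and its current parent in $\Arch_i$, or $s_0$ if no such $j$ exists.

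\textbf{Base case and inductive step.} The base case follows from the definitions of $s_p^0$, $\cc_p^0$, and $\dc_p^0$, together with $s^1 = s^2 = s_0$ initially. For the inductive step I do a case analysis on $\op_i \in \{\nop, \swapop(e), \moveop(e,e'), \connop(e,c'), \discop(e)\}$. In each case I first check that $\delta_\pi$ is defined for every $\pi$ that listens: the consistency requirements on $\sync$, together with the extra payload ($C$, $D$, or an extra edge), collectively reveal the portion of $\Arch_{i-1}$ needed to replay the validity side-conditions of $\op_i^{c_i}$ on $\Arch_{i-1}$, so definedness of $\Delta(s_{i-1}, (c_i, \op_i))$ transfers to definedness of each $\delta_\pi$ and vice versa. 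Updates to (a) and (b) are routine row-by-row verifications of the transition tables, using the side data to propagate changes in $\cc$ and $\dc$ at the processes whose neighborhoods change. For (c), the $\diam$ lemma extended to trees via $\diamtree$ gives $\statefromsync(\sync) = s_{i-1}$; hence the newly computed $s' = \Delta(\statefromsync(\sync), (c_i, \op_i)) = s_i$, and substituting the updated labels at each participating process restores $s^2_p = \diamtree$ of the subtree at $p$. For (d), the update $\idifroot{c}{s_p}{s'}$ leaves $s^1$ untouched precisely when $p$'s new parent edge did not participate in the current communication, and overwrites it with $s'$ otherwise, which is exactly what (d) demands after the (possibly rewired) tree.

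\textbf{Main obstacle and acceptance.} The hard part is maintaining (c) and (d) in the reconfiguration cases, particularly $\swapop$ and $\moveop$, where the neighborhood of $p$ changes and the meaning of $s^1$ must be re-indexed against the new parent; this is why each of these transitions carries the extra payload encoding $\cc(\pedge)$ or $\cc(e)$, and why the id-if-root convention is applied uniformly across all operations. Once the invariant is established, condition~3 of distribution is immediate: a global synchronization datum $\sync$ is globally consistent with every local state iff, by (a) and (b), it encodes the true global tree of $\Arch_k$ with the actual local states attached; by (c) we then get $\statefromsync(\sync) = s_k$, so $\exists d.\ \forall p.\ F_p(s^k_p, d) = \top$ iff $s_k \in F$, completing the proof.
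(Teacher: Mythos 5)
Your overall architecture (induction on the word length, one invariant per state component, case analysis on the operation, acceptance via the globally consistent datum) matches the paper's, but the two invariants that carry the whole argument --- your (c) and (d) --- are stated incorrectly, and the error is not cosmetic. You assert that $s^2_p$ equals $\diamtree$ of the subtree rooted at $p$, that in the inductive step $\statefromsync(\sync)=s_{i-1}$ so that the newly computed $s'$ equals the global state $s_i$, and in (d) that $s^1$ equals a global state $s_j$. None of these hold. A process only ever learns about communications in its causal past: if a communication occurs strictly inside $p$'s subtree without $p$ participating, then $\diamtree$ of $p$'s subtree advances but $s^2_p$ does not, so (c) fails already for the root of the whole tree; and if two communications occur concurrently in disjoint parts of the tree, the participants of the second one compute $\Delta$ from a state that does not reflect the first, so $s'\neq s_i$ and $s^1\neq s_j$. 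The quantity that $s^2_p$ actually tracks is the \emph{view} $\traceview{\{p\}}{w}$ --- the state obtained by replaying only the events in $p$'s causal history --- and $s^1_p$ tracks the joint view of $p$ and its parent at their last common communication. The paper therefore (i) defines these views independently of the construction, and (ii) proves a separate lemma (Lemma~\ref{lemma:diamview}) stating that $\diamtree$ applied to a correctly view-labeled subtree computes the \emph{joint} view of that subtree's processes; only then does $\statefromsync(\sync)=\traceview{\arch(c)}{w}$ follow, and diamond closure is what guarantees that $\Delta(\traceview{\arch(c)}{w},(c,\op))$ is defined iff $\Delta(s_{i-1},(c,\op))$ is, while the true global state is recovered only when the synchronization datum ranges over \emph{all} processes (which is where $D$ and acceptance live).

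Without the view formalism your induction cannot close: the hypothesis you would need already in the $\nop$ case (``the participants of the communication jointly know $s_{i-1}$'') is simply false, and there is no local repair. The fix is exactly the missing lemma: introduce $\traceview{X}{w}$ and the shared parent view, restate (c) as $s^2_p=\traceview{\{p\}}{w}$ and (d) accordingly, and prove that $\diamtree$ of a subtree labeled with these views yields the view of the union of its processes. The rest of your case analysis --- the definedness transfer using the extra payloads, the $\cc$/$\dc$ bookkeeping, the id-if-root convention, and the acceptance argument via the unique globally consistent datum --- then goes through essentially as you sketched it.
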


The rest of this section is dedicated to the proof of Theorem~\ref{thm:correctness}.

First, and completely independently from the construction, we need to define what is the view of a process and then show a useful property relating $\diamtree$ and those views.
Let $\A$ be a diamond closed RL-DFA, $w \in \Alp^\ast$ be a word, $\rho = (s_0, \Arch_0) \xrightarrow{(c_1,\op_1)} \dots \xrightarrow{(c_k,\op_k)} (s_k,\Arch_k)$ the run of $\A$ on $w$, and $X \subseteq \Proc$ a set of process.
As processes in $X$ are not necessarily part of every communication in $w$, there may be some parts of $\rho$ that are happening outside of those in $X$'s knowledge and they may not know the correct final state reached at the end of $\rho$ just by themselves.
We formally define the \emph{view} of $X$ on $w$, which intuitively correspond to the most up-to-date that processes in $X$ can deduce at the end of $\rho$, and denoted by $\traceview{X}{w}$, recursively on $\rho$.
\begin{align*}
&\traceview{X}{\varepsilon} = s_0\\
&\traceview{X}{w \cdot (c_k,\op_k)} =
\begin{cases}
\Delta(\traceview{X \cup \arch_{k-1}(c_k)}{w}, (c_k,\op_k)) &\text{if $X \cap \arch_{k-1}(c_k) \neq \emptyset$,}\\
\traceview{X}{w} &\text{otherwise.}
\end{cases}
\end{align*}
That is, starting from the last communication in which at least one in $X$ was part of, $\traceview{X}{w}$ computes backward the state that is reached through actions in which someone in $X$ has participated, then $X$ and those that shared a communication with someone in $X$, and then those that shared a communication with those, and so on.
Naturally, we have that $\traceview{\Proc}{w} = \Delta(s_0,w)$ for any word $w$.
Furthermore, with $\A$ being diamond closed, we can deduce that for any action $(c,\op)$, $\Delta(s_k,(c,\op))$ is defined if and only if $\Delta(\traceview{\arch(c)}{w},(c,\op))$ is defined.
That is, whether a communication in channel $c$ can happen cannot depend on actions made outside of the view of those listening to $c$.
We also define, for some process $p \in \Proc$, its \emph{shared parent view}, denoted by $\traceparentview{p}{w}$ as the view of $p$ and its parent in $\Arch_{i-1}$ if $1 \leq i \leq k$ is the maximal index for which both are listening to $c_i$, or $s_0$ is no such index exists.

The $\diam$ function, and by extension $\diamtree$, interplay nicely with the views of independent parts of the system.
More specifically, we show that if each node in $\tree$ is correctly labeled, then $\diamtree(\tree)$ correctly computes the view according to every process in $\tree$.

\begin{lemma}\label{lemma:diamview}
For all words $w$ with the run on $w$ ending in configuration $(s,\Arch)$, if $\tree$ is a subtree of $\Arch$ where each node $p$ is labeled by $(\traceparentview{p}{w}, \traceview{p}{w}, C_p)$, then $\diamtree(\tree) = \traceview{\{p \mid p \in \tree\}}{w}$.
\end{lemma}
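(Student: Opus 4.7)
The plan is to proceed by induction on the number of nodes of $\tree$. The base case ($\tree$ consisting of a single node $p$) is immediate: by the definition of $\diamtree$ we have $\diamtree(\tree) = s^2_p = \traceview{p}{w}$ from the labeling, which coincides with $\traceview{\{p\}}{w}$.

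For the inductive step, let $p$ be the root of $\tree$, pick a child $p_1$ of $p$, and let $\tree_1$ be the subtree rooted at $p_1$ and $\tree' = \tree \setminus \tree_1$. Both are strictly smaller subtrees of $\Arch$, so the induction hypothesis gives
\[
\diamtree(\tree_1) = \traceview{P_1}{w} \quad\text{and}\quad \diamtree(\tree') = \traceview{P \setminus P_1}{w},
\]
writing $P$ and $P_1$ for the sets of processes in $\tree$ and $\tree_1$. Setting $v = \traceparentview{p_1}{w}$ and $C = C_{p_1}$, unfolding the definition of $\diamtree$ yields
\[
\diamtree(\tree) = \diam\bigl(v, \traceview{P \setminus P_1}{w}, \traceview{P_1}{w}, C\bigr),
\]
so it remains to show that this equals $\traceview{P}{w}$ by invoking the preceding $\diam$ lemma.

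To apply that lemma I would identify the largest index $i$ such that the $i$-th action $(c_i, \op_i)$ of $w$ involves both $p$ and $p_1$ (taking $i = 0$ and $v = s_0$ if no such index exists) and use $(v, \Arch_{i-1})$ as witness configuration. The suffix of $w$ starting at position $i$ is then partitioned into $w_1$, the subsequence of actions on channels outside $C$, and $w_2$, the subsequence of actions on channels in $C$; the $i$-th action itself goes into $w_1$, since its channel contains $p$ and hence is not in $C$. By maximality of $i$, no subsequent action involves both $p$ and $p_1$, so by the tree-like connectivity preserved through every intermediate TCA, each such action lies entirely on one side of the (evolving) edge between $p$ and $p_1$, and its channel is either in $C$ or disjoint from $C$. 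This yields the required $\Arch_{i-1}$-independence of $w_1$ and $w_2$, and a direct unfolding of the recursive definition of $\traceview{\cdot}{\cdot}$ then gives $\Delta(v, w_1) = \traceview{P \setminus P_1}{w}$, $\Delta(v, w_2) = \traceview{P_1}{w}$, and $\Delta(v, w_1 w_2) = \traceview{P}{w}$. The $\diam$ lemma then closes the induction.

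The main obstacle is the bookkeeping in the splitting step: because $\swap$, $\move$, $\connect$, and $\disc$ actively reshape both the spanning tree and the channel memberships along the suffix, one must argue that ``$p_1$'s side of the edge $(p, p_1)$'' and the membership $c \in C$ remain consistently interpretable throughout. I would extract this from the TCA-preservation lemma together with an auxiliary invariant stating that once $p$ and $p_1$ stop communicating jointly, no reconfiguration can migrate a channel across the edge between them, and that the witness configuration $(v, \Arch_{i-1})$ is indeed reachable in $\A$ so that diamond-closure applies.
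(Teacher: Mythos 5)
Your argument follows essentially the same route as the paper's: induct on the tree, peel off one child subtree via the recursive definition of $\diamtree$, locate the last joint communication between the child and its parent, split the remaining suffix into the actions on channels in $C$ versus the rest, argue their independence from tree connectivity, and close with the $\diam$ lemma (the paper packages the peeling as an inner induction over the root's children, but it is the same unfolding). The only concrete slip is an off-by-one: since $v = \traceparentview{p_1}{w}$ is by definition the view \emph{after} the last joint action $(c_i,\op_i)$, that action must terminate the prefix, with the suffix starting at position $i+1$ and the witness architecture being $\Arch_i$ rather than $\Arch_{i-1}$; placing action $i$ inside $w_1$ as you propose would make $\Delta(v,w_1)$ apply it twice, so the claimed equality $\Delta(v,w_1)=\traceview{P\setminus P_1}{w}$ fails as written, though the fix is immediate. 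Your closing caveat about reconfigurations migrating channels across the $(p,p_1)$ edge is a real subtlety that the paper itself asserts without proof, so flagging it as needing an auxiliary invariant is if anything more careful than the original.
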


\begin{proof}
By induction on the structure of $\tree$, same as in \cite{KrishnaM13}.
If $\tree$ is empty then $\diamtree(\tree) = s_0 = \traceview{\emptyset}{w}$.
Otherwise, assume $\tree$ is a root $p$ labeled by $(s^1_p, s^2_p,\CH)$ with $k$ children labeled by $(s^1_1,s^2_1,C_1), \dots, (s^1_k,s^2_k,C_k)$ and leading to subtrees $\tree_1, \dots, \tree_k$ respectively.
Assume that the property holds on those subtrees, i.e. that for all $i \leq k$ $\diamtree(\tree_i) = \traceview{\{p \mid p \in \tree_i\}}{w}$.
Then with $T_i'$ denoting the subtree of $\tree$ composed of $p$ and its $i$ first children, we recursively show that
\[
P_i: \diamtree(\tree_i') = \traceview{\{p \mid p \in \tree_i'\}}{w}
\]
For the initial step $i = 0$, $\diamtree(\tree) = s^2_p = \traceview{\{p\}}{w}$ by assumption.
Now suppose $P_{i-1}$ holds for some $1 \leq i \leq k$.
Then we have $\diamtree(\tree_i') = \diam(s^1_i, \diamtree(\tree_{i-1}'), \diamtree(\tree_i), C_i)$, and we know that $s^1_i = \traceparentview{i}{w}$, that $\diamtree(\tree_{i-1}') = \traceview{\{p \mid p \in \tree_i'\}}{w'}$, and that $\diamtree(\tree_i) = \traceview{\{p \mid p \in \tree_i\}}{w}$.

Let $w'$ be the maximal prefix of $w$ such that $\traceparentview{i}{w} = \traceparentview{i}{w'}$, i.e. the last action in $w'$ is the last common action between the root of $\tree_i$ and its parent at that time (which might or might not be $p$ due to later reconfiguration operations, but that is irrelevant).
Then let $w''$ be such that $w = w' \cdot w''$, and let $w_i$ be the minimal subword of $w''$ containing all actions involving channels in $C_i$.
Since after $w'$ there are no communications between $\tree_i$ and the rest of the tree, necessarily $\traceview{\{p \mid p \in \tree_i\}}{w} = \traceview{\{p \mid p \in \tree_i\}}{w' \cdot w_i}$.
As $w_i$ is chosen to be minimal, we can deduce that $\traceview{\{p \mid p \in \tree_i\}}{w' \cdot w_i} = \Delta(\traceview{\{p \mid p \in \tree_i\}}{w'}, w_i)$,
and thus that $\diamtree(\tree_i) = \Delta(s^1_i, w_i)$.
Then let $w_{i-1}$ be the rest of $w''$, i.e. the word such that $w''$ is an interleaving of $w_i$ and $w_{i-1}$.
Similarly, we obtain that $\diamtree(\tree_{i-1}') = \Delta(s^1_i, w_{i-1})$.
Finally, by definition of the $\diam$ function, we get that $\diamtree(\tree_i') = \diam(s^1_i, \diamtree(\tree_{i-1}'), \diamtree(\tree_i), C_i) = \Delta(s^1_i, w_i \cdot w_{i-1}) = \Delta(s^1_i, w'') = \traceview{\{p \mid p \in \tree_i'\}}{w}$.
We have shown that $P_i$ holds for all $i$, and therefore for $i = k$, and as $\tree= \tree_k'$ we have $\diamtree(\tree) = \traceview{\{p \mid p \in \tree\}}{w}$.
\qed
\end{proof}

Note that this lemma implies that if $\tree$ is the full tree and is correctly labeled, then $\diamtree(\tree) = \traceview{\Proc}{w} = \Delta(s_0,w)$.
We will use this property to prove the correctness of our construction.

Let us define invariants that should be satisfied throughout a run.
Let $\A$ be a diamond closed RL-DFA and $\A_\|$ be as defined in Section~\ref{sec:diamond closed}.

Let $w=(c_1, \op_1)(c_2, \op_2)\ldots (c_n, \op_n)\in\Alp^*$.
Our first invariant states that a run of $\A_\|$ on $w$ exists if and only if a run of $\A$ exists.
\[
\invdef{w}: \text{The run $\rho$ of $\A$ on $w$ is defined $\Leftrightarrow$ The run $\rho_\|$ of $\A_\|$ on $w$ is defined.}
\]
Now if both runs are undefined, then $w$ is trivially rejected by both $\A$ and $\A_\|$ and there is nothing left to show.
So for the remaining invariants, assume that both $\rho$ and $\rho_\|$ are defined:
\begin{align*}
\rho &= (s_0, \Arch_0) \xrightarrow{(c_1, \op_1)} (s_1, \Arch_1) \xrightarrow{(c_2, \op_2)} \dots \xrightarrow{(c_n, \op_n)} (s_n, \Arch_n)\\
\rho_\| &= (s_p^0) \xrightarrow{d_1,(c_1,\op_1)} (s_p^1) \xrightarrow{d_2,(c_2,\op_2)} \dots \xrightarrow{d_n,(c_n,\op_n)} (s_p^n)
\end{align*}
with $\Arch_n = (\CA, \tree)$ and $s_p^n = \langle (s^1_p, s^2_p), L_p, (\pedge_p,
\cedges_p), \cc_p, \dc_p \rangle$.
We define the function $N$ that is simply the projections on the third component of a state, i.e. $N(\langle
(s^1, s^2), L, (\pedge, \cedges), \cc, \dc \rangle) = (\pedge, \cedges)$.
Then we establish the invariants relating those two runs.
The first pair of invariants relate the actual state of $\rho$ with the pair of states that each process keeps in their local state, which corresponds to the intuition given in the previous section.
\begin{align*}
&\invstateparent{w}: \forall p \in \Proc. s^1_p = \traceparentview{p}{w}
&\invstate{w}: \forall p \in \Proc. s^2_p = \traceview{\{p\}}{w}
\end{align*}
Those two invariants, used with Lemma~\ref{lemma:diamview}, are key to prove that the languages are equivalent.
Then we have invariants that relate to the TCA itself.
\begin{align*}
&\invCA{w}: \forall c \in \CH. \CA(c) = \{p \in \Proc \mid c \in L_p\}
&\invtree{w}: \tree = \maketree{(N(s_p))_{p \in \Proc}}
\end{align*}
Namely, $\invCA{w}$ states that at after the run the set of
channels process $p$ is connected to is the same according to the
distributed and centralized view.
Invariant $\invtree{w}$ states that the centralized view of the
spanning tree in the architecture is the composition of the local
environments maintained by individual processes.
\begin{align*}
&\invcc{w}: \forall p \in \Proc, e \in \nset. \cc_p(e) = \\
&\quad \{c \in \CH \mid p,q \in \arch(c) \text{ where $q$ is the neighbor of $p$ along edge $\edgelab^{-1}(e)$}\}\\
&\invdc{w}: \forall p \in \Proc, e \in \nset. \dc_p(e) = \{c \in \CH \mid\ p \notin \arch(c) \land \text{ there is a path in $\tree$}\\
&\quad \text{from $p$ starting with edge $\edgelab^{-1}(e)$ reaching some $q$ with $q \in \arch(c)$}\}
\end{align*}
Those two invariants basically state that the $\cc$ and $\dc$ component are exactly as described.
Let us now prove that all those invariants hold inductively.

\paragraph*{Initialization of the invariants.}
All invariants hold on the empty word $w = \varepsilon$.
$\invdef{\varepsilon}$ is trivial.
$\invstateparent{\varepsilon}, \invstate{\varepsilon}, \invCA{\varepsilon}, \invtree{\varepsilon}, \invcc{\varepsilon},$ and $\invdc{\varepsilon}$ are all easily derived from the definition of the initial state $s_p^0$.

\paragraph*{Invariants after a $\nop$ operation.}
Now consider a word of the form $w' = w \cdot (c,\nop)$, with runs
\begin{align*}
\rho &= (s_0, \Arch_0) \xrightarrow{w} (s, \Arch) \xrightarrow{(c,\nop)} (s',\Arch')\\
\rho_\| &= (s_p^0) \xrightarrow{w} (s_p) \xrightarrow{\sync, (c,\nop)} (s_p')
\end{align*}
Assume that all invariants hold in $w$.
We show that they still hold in $w'$.

For $\invdef{w'}$, first note that there is exactly one possible $\sync$ (up to reordering) that is consistent with $s_p$ for all processes $p$ listening to $c$, which is the tuple where each of these processes share their information correctly.
The order of the tuple does not matter, as the tree $\treesync$ and the diamond closure $\statefromsync(\sync)$ are computed in an order-agnostic way.
Then we have that $s_p' = \delta_p(s_p, (\sync, (c, \nop)))$ is defined for all $p \in \arch(c)$ if and only if state $s' = \Delta(\statefromsync(\sync), (c,\nop))$ is defined, and due to $\invdef{w}$ we know that the runs on $w$ were both defined.
Thus $\rho$ is defined if and only if $\rho_\|$ is defined.
Let us assume both are.

For $\invstateparent{w'}$ and $\invstate{w'}$, let $p \in \Proc$.
Let $s_p = \langle (s^1_p, s^2_p), L_p, (\pedge_p,\cedges_p), \cc_p, \dc_p \rangle$
and $s_p' = \langle ((s^1)_p', (s^2)_p'), L_p', ((\pedge)_p',(\cedges)_p'), \cc_p', \dc_p' \rangle$.
If $p$ was not part of the communication on $c$, then $s_p = s_p'$ and by definition $\traceparentview{p}{w'} = \traceparentview{p}{w}$ and $\traceview{\{p\}}{w'} = \traceview{\{p\}}{w}$, so $\invstateparent{w'}$ and $\invstate{w'}$ are trivially obtained from $\invstateparent{w}$ and $\invstate{w}$ respectively.
Assume now that $p$ was part of this communication.
Note that by $\invCA{w}$ and $\invtree{w}$, we know that $\treesync$ is exactly the subtree containing the processes in $\arch(c)$.
Let us first show $\invstate{w'}$.
By definition, $\traceview{\{p\}}{w'} =\Delta(\traceview{\arch(c)}{w},(c,\nop))$.
By combining $\invstate{w}$, $\invstateparent{w}$, $\invdc{w}$, and Lemma~\ref{lemma:diamview}, we get that $\traceview{\arch(c)}{w} = \diamtree(\treesync) = \statefromsync(\sync)$.
Thus $\traceview{\{p\}}{w'} =\Delta(\statefromsync(\sync),(c,\nop)) = (s^2)_p'$.
For $\invstateparent{w'}$, there are two cases.
First, assume that the parent of $p$ in $\tree$ is not part of the communication on $c$ (either because $p$ is the root, or because its parent does not listen to $c$).
In that case, $\traceparentview{p}{w'} = \traceparentview{p}{w} = s^1_p$ by definition of $\traceparentview{p}{w'}$ and by $\invstateparent{w}$.
Moreover, $p$ must be the root of $\treesync$.
Therefore $(s^1)_p' = s^1_p$ and $\invstateparent{w'}$ is satisfied.
Second, assume now that $p$'s parent, say $q$, is part of the communication.
Then by definition $\traceparentview{p}{w'} = \traceview{\{p,q\}}{w'} = \Delta(\traceview{\arch(c)}{w},(c,\nop))$, which is $(s^2)_p'$ by the same proof as for $\invstate{w'}$.

Invariants $\invCA{w'}, \invtree{w'}, \invcc{w'}, \invdc{w'}$ all trivially hold because nothing related to the TCA has been modified by the last transition.

\paragraph*{Invariants after a $\swapop(e)$ operation.}
Let $w' = w \cdot (c,\swapop(e))$, with associated runs
\begin{align*}
\rho &= (s_0, \Arch_0) \xrightarrow{w} (s, \Arch) \xrightarrow{(c,\swapop(e))} (s',\Arch')\\
\rho_\| &= (s_p^0) \xrightarrow{w} (s_p) \xrightarrow{(\sync,C,D),(c,\swapop(e))} (s_p')
\end{align*}

For $\invdef{w'}$, let us first assume that $\rho$ is defined, and so that $(c,\swapop(e))$ is valid from $\Arch$.
By $\invdef{w}$, $\rho_\|$ is defined up to $w$.
Just as with the $\nop$, there exists a (unique up to reordering) $\sync$ data that is consistent with the state of every process for $c$.
Since the $\swapop$ is valid in $\Arch$, then by $\invCA{w}$ and $\invtree{w}$ exists exactly one $p$ and $q$ such that $\pedge_p = e$ and $e \in \cedges_q$, both of which are listening to $c$.
Therefore $e$ is in $\treesync$.
The set $C = \cc(\pedge_q)$ is exactly the set of channels shared by $q$ and its parent (if it exists) by $\invcc{w}$.
Since $\swapop$ is valid, then $p$ must listen to all of $C$, so the condition $C \subseteq L_p$ is satisfied.
Finally, as $\rho$ is defined, then necessarily $s' = \Delta(s, (c,\swapop(e)))$ is defined.
From this we have that $\Delta(\traceview{\arch(c)}{w}, (c,\swapop(e)))$ is defined, and as $\traceview{\arch(c)}{w} = \diamtree(\treesync) = \statefromsync(\sync)$, the last condition is satisfied and so $\rho_\|$ is defined.
The reverse direction can be shown similarly.

The proof of $\invstateparent{w'}$ and $\invstate{w'}$ is exactly the same as with the $\nop$ (and that holds true with all remaining operations, so we shall skip those two invariants from now on).

$\invCA{w'}$ is derived from $\invCA{w}$ as the swap does not change either $\CA$ or the listening functions: $L' = L$ for all processes.

For $\invtree{w'}$, after the transition we have $((\pedge_p)',(\cedges_p)') = (\pedge_q,\cedges_p \uplus \{e\})$ and $((\pedge_q)',(\cedges_q)') = (e,\cedges_q \setminus \{e\})$, and for the other processes nothing changes.
So the new tree $\tree' = \maketree{(N(s_p'))_{p \in \Proc}}$ is exactly $\tree$ but with the positions of $p$ and $q$ swapped, which is the result of applying $\swapop(e)$ to $\Arch$.

For $\invcc{w'}$, let us first look at process $p$.
We set $\cc'_p(\pedge_q) = C$: we know that $p$ is listening to all of $C$ otherwise the transition would not be defined and that the process which has $\pedge_q$ as one of its children (again, if it exists) is also listening to all of $C$ because $C = \cc_q(\pedge_q)$ and by $\invcc{w}$.
Moreover, there cannot be a channel $c' \notin C$ that both $p$ and $q$'s parent both listen to but not $q$, as it would break the connectivity condition of the TCA.
Thus, $C$ is exactly the set of channels shared by $p$ and its new parent.
Since the other neighbors of $p$ are the same, the rest of $\cc_p$ is unchanged.
Similarly process $q$ only lost one neighbor, with $p$ becoming its new parent, so nothing needs to change in $\cc_q$.
Thus $\invcc{w'}$ holds.

Lastly, for $\invdc{w'}$ let us first look again at $p$ first.
By assumption $D = \dc_q(\pedge_q)$, so since $p$ is swapping with $q$ we set $\dc_p'((\pedge_p)') = \dc_p'(\pedge_q) = D$.
Since by $\invdc{w}$ those channels in $D$ were previously found in $\dc_p(e)$, we need to remove them in $\dc_p'(e)$.
Otherwise the rest is unchanged.
The same reasoning for $q$ states that the channels in $D$, that were previously found in $\dc_q(\pedge_q)$ can now be found in $\dc_q'(e)$, as $e$ leads to $p$ which leads to $q$'s previous parent.
No other process has anything to change, and so $\invdc{w'}$ holds.

\paragraph*{Invariants after a $\moveop(e,e')$ operation.}
Let $w' = w \cdot (c,\moveop(e,e'))$, with associated runs
\begin{align*}
\rho &= (s_0, \Arch_0) \xrightarrow{w} (s, \Arch) \xrightarrow{(c,\moveop(e,e'))} (s',\Arch')\\
\rho_\| &= (s_p^0) \xrightarrow{w} (s_p) \xrightarrow{(\sync,C),(c,\moveop(e,e'))} (s_p')
\end{align*}

For $\invdef{w'}$, let us first assume that $\rho$ is defined, and so that $(c,\moveop(e,e'))$ is valid from $\Arch$.
By $\invdef{w}$, $\rho_\|$ is defined up to $w$.
Just as with the $\nop$, there exists a (unique up to reordering) $\sync$ data that is consistent with the state of every process for $c$.
Since the $\moveop$ is valid in $\Arch$, then by $\invCA{w}$ and $\invtree{w}$ exists exactly one $p$, $q$, and $q'$ such that $\pedge_p = e$, $\pedge_q = e'$, and $e \in \cedges_{q'}$, with both $q$ and $q'$ listening to $c$.
Since $q$ listens to $c$, then $e'$ is in $\tree_c$.
The set $C = \cc(e)$ is exactly the set of channels shared by $p$ and $q'$, by $\invcc{w}$.
Since the move is valid, then $q$ must also listen to all of $C$, so the condition $C \subseteq L$ if $\pedge = e'$ is satisfied.
Finally, as $\rho$ is defined, then necessarily $s' = \Delta(s, (c,\move(e,e')))$ is defined and with a similar reasoning as with the previous operation we get that $\Delta(\statefromsync(\sync),  (c,\move(e,e')))$ is defined.

$\invCA{w'}$ holds because $\invCA{w}$ does and because the move operation changes neither $\CA$ nor the listening functions of each process.

As for $\invtree{w'}$, only the neighborhoods of $q$ and $q'$ are modified by the move operation.
More specifically, we have that $(\cedges_q)' = \cedges_q \uplus \{e\}$ and $(\cedges_{q'})' = \cedges_{q'} \setminus \{e\}$, corresponding exactly to the transfer of $p$ from $q'$ to $q$.

For $\invcc{w'}$ and $\invdc{w'}$, $q$ has to incorporate the information regarding $e$ by copying the old information of $q'$ (given by $C$ and $D$ for $\cc_q$ and $\dc_q$ respectively).
For $q'$, it simply removes $e$ from its $\cc$ component because $e$ is not a neighbor anymore, and everything that was in $\dc(e)$ is now transferred to the edge linking $q$ and $q'$, since $e$ can now be found in that direction.
Everything else is unchanged, and one can verify that $\invcc{w'}$ and $\invdc{w'}$ both hold.

\paragraph*{Invariants after a $\connop(e,c')$ operation.}
Let $w' = w \cdot (c,\connop(e,c'))$, with associated runs
\begin{align*}
\rho &= (s_0, \Arch_0) \xrightarrow{w} (s, \Arch) \xrightarrow{(c,\connop(e,c'))} (s',\Arch')\\
\rho_\| &= (s_p^0) \xrightarrow{w} (s_p) \xrightarrow{(\sync,e'),(c,\connop(e,c'))} (s_p')
\end{align*}

For $\invdef{w'}$, let us first assume that $\rho$ is defined, and so that $(c,\connop(e,c'))$ is valid from $\Arch$.
By $\invdef{w}$, $\rho_\|$ is defined up to $w$.
Just as with the $\nop$, there exists a (unique up to reordering) $\sync$ data that is consistent with the state of every process for $c$.
Since the $\connop$ is valid in $\Arch$, $p$ does not listen to $c'$ and there is a neighbor $q$ of $p$ that does, and both $p$ and $q$ listen to $c$.
Take $e'$ to be the parent edge of any such $q$ (actually there can only be at most one due to connectivity, but that is not relevant yet).
Then by $\invCA{w}$ and $\invtree{w}$ we have that $\pedge_p = e$, $\pedge_q = e'$, and with $p$ and $q$ both being in $\treesync$.
Also $c' \notin L_p$ and $c' \in L_q$.
Finally, as $\rho$ is defined, then necessarily $s' = \Delta(s, (c,\connop(e,c')))$ is defined and with again a similar reasoning we get that $\Delta(\statefromsync(\sync),  (c,\connop(e,c')))$ is defined.

For $\invCA{w'}$, we note that $\CA'$ is $\CA$ but with $\arch'(c') = \arch(c') \uplus \{p\}$.
By $\invCA{w}$, we have that $\arch(c') = \{\pi \in \Proc \mid c' \in L_\pi\}$.
The only change in listening functions after the transition is for $p$: $L_p' = L_p \uplus \{c'\}$.
Therefore, $\arch'(c') = \arch(c') \uplus \{p\} = \{\pi \in \Proc \mid c' \in L_\pi\} \uplus \{p\} = \{\pi \in \Proc \mid c' \in L_\pi'\}$, and so we get $\invCA{w'}$.

$\invtree{w'}$ is trivially obtained from $\invtree{w}$ as no neighborhood is changed by the transition.

For $\invcc{w'}$, as only $p$ connects to $c'$ and the only neighbor of $p$ that is also connected to $c'$ is $q$, only those two need to update their $\cc$ component to add $c'$ in each other's direction.
This is exactly what is done in $\cc'(e_{pq}) = \cc(e_{pq}) \uplus \{c'\}$, and so $\invcc{w'}$ holds.

Finally for $\invdc{w'}$ again there is only one change compared to the previous situation: now that $p$ also listens to $c'$ then it must be removed from its $\dc$ component.
Since we know that $q$ listened to $c'$ already before, then necessarily we had that $c' \in \dc_p(e_{pq})$.
Thus we remove it in $\dc_p'(e_{pq})$ and obtain the correct result, so $\invdc{w'}$ is satisfied.

\paragraph*{Invariants after a $\discop(e)$ operation.}
Let $w' = w \cdot (c,\discop(e))$, with associated runs
\begin{align*}
\rho &= (s_0, \Arch_0) \xrightarrow{w} (s, \Arch) \xrightarrow{(c,\discop(e))} (s',\Arch')\\
\rho_\| &= (s_p^0) \xrightarrow{w} (s_p) \xrightarrow{\sync,(c,\discop(e)))} (s_p')
\end{align*}

For $\invdef{w'}$, let us first assume that $\rho$ is defined, and so that $(c,\discop(e))$ is valid from $\Arch$.
By $\invdef{w}$, $\rho_\|$ is defined up to $w$.
Just as with the $\nop$, there exists a (unique up to reordering) $\sync$ data that is consistent with the state of every process for $c$.
Since the $\discop$ is valid in $\Arch$, we have that $\arch(c)$ is of size at least 3, that $p$ listens to $c$ and has exactly one neighbor $q$ that also does, and that both $p$ and $q$ listen to another channel $c' \neq c$.
Then by $\invCA{w}$ and $\invtree{w}$ we have that $\pedge_p = e$ which is in $\treesync$, and as $p$ has only one neighbor in $\treesync$ then it must either be a leaf or the root of $\treesync$ with only one child.
In both cases, $e_{pq}$ is the edge joining $p$ and $q$.
Then since both $p$ and $q$ listen to $c'$, then by $\invcc{w}$ $c' \in \cc_p(e_{pq})$ and $c' \in \cc_q(e_{pq})$.
The condition $|\treesync| \geq 3$ is obtained from the condition on the size of $\arch(c)$.
Finally, as $\rho$ is defined, then necessarily $s' = \Delta(s, (c,\discop(e)))$ is defined and with again a similar reasoning we get that $\Delta(\statefromsync(\sync),  (c,\discop(e)))$ is defined.

For $\invCA{w'}$, similarly with the previous operation, here the only difference after the transition is the removal of $p$ from $\arch(c)$, which is mimicked by removing $c$ from $L_p$.
$\invtree{w'}$ is again trivial since the tree is not changed.

Again, mirroring the previous operation, for $\invcc{w'}$ and $\invdc{w'}$ the only changes this time is to remove $c$ from the channels shared between $p$ and $q$ and adding it to the $\dc_p$ in the direction of $q$, which is exactly what is done in the construction.

\paragraph*{Conclusion of the proof.}
We have shown that the invariants hold.
Let us end by showing that $\A_\|$ distributes $\A$ by showing the three conditions of Section~\ref{sec:distrib} are satisfied.

Item~1 is covered by invariants $\invCA{w}$ and $\invtree{w}$.
Combining them with Lemma~\ref{lemma:diamview}, we get that $\diamtree(\tree) = \traceview{\Proc}{w} = \Delta(s_0,w)$ when $\tree$ is correctly labelled with the information stored in the states of each process.
This tree is exactly the tree $\treesync$ where $\sync$ is the unique data globally consistent with all processes $p \in \Proc$.
Thus Item~2 is satisfied: $D$ is the function that first computes the unique data $\sync$ globally consistent with all $s_p$, then computes the corresponding tree $\treesync$, and finally applies the $\diamtree$ function to it.
Item~3 also follows as for all processes to agree on acceptance the only possible data choice is this unique $\sync$, and then all processes accept if and only if $\diamtree(\treesync)$ is in $F$.
This concludes the proof of correctness of this construction. 

\subsection{Complexity and non-reconfigurable case}
We analyze the complexity of the proposed construction.
A state of a process $p$ is of the form $s_p = \langle (s^1, s^2), L, (\pedge, \cedges), \cc, \dc \rangle$, and a global state $\textbf{s} = (s_p)_{p \in \Proc}$ is a product of the states for each process.
The first component is of the shape $(s^1, s^2) \in S^2$, so a state has quadratic dependency on the size of $S$.
Next, $L$ is a subset of $\CH$, $\pedge$ and $\cedges$ combined are a subset of $[n]$ (recall that $n = |\Proc|$), and $\cc$ and $\dc$ are partitions of $\CH$ over $[n]$.
Combining all those together, we get that $S_p$ has size quadratic in $S$ and exponential in both $\CH$ and $\Proc$.
Since in our setting $n$ is fixed, a global state $\textbf{s}$ takes
space logarithmic in $S$ and polynomial in $\CH$ and $\Proc$.
Notice that the set $\TCAset{\CH}{\Proc}$ of all possible TCAs is also
exponential in $\CH$ and $\Proc$, and as $\A$ must be diamond closed,
its state set must relate to this set.
Therefore, overall, we get the following result.
\begin{theorem}
The size of $\A_\|$ is polynomial in the size of $\A$.
\end{theorem}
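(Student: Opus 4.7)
The plan is to bound the size of each component of a process-local state, combine these bounds by taking the product over components and then over processes, and finally compare the resulting bound against a lower bound on $|\A|$ that follows from $\A$ being diamond closed (hence capable of distinguishing different TCAs in its reachable configurations).

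First, I would enumerate the contributions of the five fields in a local state $s_p = \langle (s^1,s^2), L, (\pedge,\cedges), \cc, \dc\rangle$. The pair $(s^1,s^2) \in S^2$ contributes a factor of $|S|^2$. The listening set $L \subseteq \CH$ contributes at most $2^{|\CH|}$, and the neighborhood $(\pedge,\cedges) \in \nset \times 2^{\nsetp}$ contributes at most $n \cdot 2^{n-1}$. The maps $\cc : \pcedges \to 2^L$ and $\dc : \pcedges \to 2^{\CH\setminus L}$ are essentially choices of a subset of $\CH$ for each of at most $n$ neighbors (with $\dc$ being a partition of $\CH\setminus L$), so each contributes at most $(2^{|\CH|})^n = 2^{n|\CH|}$. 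Multiplying, $|S_p| \leq |S|^2 \cdot 2^{O(n|\CH|)}$, and the global state space has size at most $|S|^{2n} \cdot 2^{O(n^2|\CH|)}$.

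Next I would give a lower bound on $|\A|$. Because $\A$ is required to be diamond closed, and because the notion of independence depends on the current TCA (so the reachable configurations must distinguish between architectures to enforce diamond closure correctly), the number of states of $\A$ together with its TCA component must be at least the number of distinct reachable TCAs. Since $|\TCAset{\CH}{\Proc}|$ is itself exponential in $|\CH|$ and in $n$, we get $|\A| \geq 2^{\Omega(n|\CH|)}$ (for nontrivial instances, up to the care needed to state what counts as part of ``the size of $\A$''). Combined with the obvious $|S|$ contribution, this subsumes the $2^{O(n^2|\CH|)}$ factor above when $n$ is treated as a constant.

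The final step is the comparison: with $n = |\Proc|$ fixed, the bound $|S|^{2n} \cdot 2^{O(n^2|\CH|)}$ on $|\A_\|\|$ is polynomial in $|S|$ and in $2^{|\CH|}$, both of which are bounded above by $|\A|$. Hence $|\A_\|\| = \mathrm{poly}(|\A|)$. The one place that requires a little care, rather than being a real obstacle, is making precise what is meant by ``size of $\A$'': since the RL-DFA explicitly carries a TCA component in each configuration, the natural measure of $|\A|$ already includes the exponential-in-$|\CH|$ and $n$ factors, and this is exactly what makes the polynomial comparison go through without having to argue about $n$ being part of the input.
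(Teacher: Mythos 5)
Your proposal is correct and follows essentially the same route as the paper: bound each component of a local state to get a factor quadratic in $|S|$ times a factor exponential in $|\CH|$ and $n$, treat $n$ as fixed, and then absorb the exponential factor into $|\A|$ by observing that a diamond-closed RL-DFA must already account for the exponentially many TCAs in its configurations. Your version is somewhat more explicit about the lower bound on $|\A|$ and about what ``size of $\A$'' must mean for the comparison to go through, but the argument is the same.
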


Our construction is also a generalization of the construction of Krishna and Muscholl \cite{KrishnaM13}.
That is, assume no reconfiguration operation happens during a run over $w$ (i.e. we only have $\nop$ operations in $w$).
In the run of the centralized automaton $\A$, this means that the TCA never changes, and remains $\Arch_0$ throughout the run.
Similarly in the run of $\A_\|$, only the first component $(s^1,s^2)$ in each process state gets updated; the remaining components stay as they were initially.
We can then build a simpler $\A_\|$ by removing those fixed parts and
only keeping the pair of states, dropping the complexity down to simply
polynomial in $S$.
This new construction corresponds to the one from~\cite{KrishnaM13}, but generalized to tree-like architectures instead of trees in which all channels are binary.
If we further restrict channels to be binary in the now fixed architecture, we get back exactly the same construction.
Indeed if $\CA(c) = \{p,q\}$ with $p$ being the parent of $q$ then
$\diamtree(\tree_c)$ reduces to $\diam(s^1_q, s^2_p, s^2_q, C_q)$ which
is the way updates are computed in Krishna and Muscholl's construction.

\section{Conclusion}
\label{sec:conc}
We define automata over reconfiguration languages, along with a notion
of distribution by reconfigurable asynchronous automata, as well as a semantic (confluence) property that ensures distributability.
Our construction imposes a tree-like communication architecture, but supports both communication channels with an arbitrary number of members, as well as reconfiguration of the communication architecture, and therefore significantly extends the construction from \cite{KrishnaM13} that is restricted to fixed architectures with binary channels that are arranged in a tree. Our distribution method retains the quadratic dependency on
the number of states of the central automaton from~\cite{KrishnaM13},
but uses additional data for the distribution of the (changing)
communication architecture.

Future works include generalizing this result to general communication
topologies.
There are several difficulties involved.
First, devising the set of atomic reconfiguration operations that will
allow to gradually transform an architecture into any possible other architecture.
These operations need to be ``locally safe'', that is allow processes
to update their view of the architecture so that the global
architecture can still be captured and that enough information about
independence is maintained locally.
Second, the diamond property of the centralized automaton becomes more
involved and requires careful analysis of how independence changes over
time and operations.
Finally, follow up on Zielonka's construction in its full complexity
and extend it with reconfiguration.
We need to both add and update the information that processes maintain about
the architecture, as well as use this information in order to understand
(and update others about) the progress of the global computation.

\clearpage
\bibliographystyle{splncs04}
\bibliography{bib}

\end{document}